\newenvironment{proof}{\par\noindent{\bf Proof:}}{\mbox{}\hfill$\qed$\\}
\newcommand{\ignore}[1]{ }
\newcounter{rem}
\def\qed{\hbox{\rlap{$\sqcap$}$\sqcup$}}
\begin{document}

\title{Fault-Tolerant Additive Weighted Geometric Spanners}

\author{
Sukanya Bhattacharjee\inst{1}
\and
R. Inkulu\inst{1}\thanks{This research is supported in part by NBHM grant 248(17)2014-R\&D-II/1049}
}

\institute{
Department of Computer Science \& Engineering\\
IIT Guwahati, India\\
\email{\{bsukanya,rinkulu\}@iitg.ac.in}
}

\maketitle

\pagenumbering{arabic}
\setcounter{page}{1}

\begin{abstract}
Let $S$ be a set of $n$ points and let $w$ be a function that assigns non-negative weights to points in $S$. 
The additive weighted distance $d_w(p, q)$ between two points $p,q \in S$ is defined as $w(p) + d(p, q) + w(q)$ if $p \ne q$ and it is zero if $p = q$.
Here, $d(p, q)$ is the geodesic Euclidean distance between $p$ and $q$.
For a real number $t > 1$, a graph $G(S, E)$ is called a {\it $t$-spanner} for the weighted set $S$ of points if for any two points $p$ and $q$ in $S$ the distance between $p$ and $q$ in graph $G$ is at most $t$.$d_w(p, q)$ for a real number $t > 1$.
For some integer $k \geq 1$, a $t$-spanner $G$ for the set $S$ is a {\it $(k, t)$-vertex fault-tolerant additive weighted spanner}, denoted with $(k, t)$-VFTAWS, if for any set $S' \subset S$ with cardinality at most $k$, the graph $G \setminus S'$ is a $t$-spanner for the points in $S \setminus S'$.
For any given real number $\epsilon > 0$, we present algorithms to compute a $(k, 4+\epsilon)$-VFTAWS for the metric space $(S, d_w)$ resulting from the points in $S$ belonging to any of the following: $\mathbb{R}^d$, simple polygon, polygonal domain, and terrain.
Note that $d(p, q)$ is the geodesic Euclidean distance between $p$ and $q$ in the case of simple polygons and terrains whereas in the case of $\mathbb{R}^d$ it is the Euclidean distance along the line segment joining $p$ and $q$.
\end{abstract}

\begin{keywords}
Computational Geometry, Geometric Spanners, Approximation Algorithms
\end{keywords}

\section{Introduction}
\label{sect:intro}

When designing geometric networks on a given set of points in a metric space, it is desirable for the network to have short paths between any pair of nodes while being sparse with respect to the number of edges. 
Let $G(S, E)$ be an edge-weighted geometric graph on a set $S$ of $n$ points in $\mathbb{R}^d$.
The weight of any edge $(p, q) \in E$ is the Euclidean distance $|pq|$ between $p$ and $q$.
The distance in $G$ between any two nodes $p$ and $q$, denoted by $d_G(p, q)$, is defined as the length of a shortest (that is, minimum-weighted) path between $p$ and $q$ in $G$.
The graph $G$ is called a {\it $t$-spanner} for some $t \ge 1$ if for any two points $p, q \in S$ we have $d_G(p, q) \leq t.|pq|$.
The smallest $t$ for which $G$ is a $t$-spanner is called the {\it stretch factor} of $G$, and the number of edges of $G$ is called its size.

Peleg and Sch\"{a}ffer~\cite{journals/jgt/PelegSchaffer89} introduced spanners in the context of distributed computing and by Chew~\cite{journals/jcss/Chew89} in a geometric context.
Alth\"{o}fer et~al. \cite{journals/dcg/AlthoferDDJS93} first attempted to study sparse spanners on edge-weighted graphs that have the triangle-inequality property.
The text by Narasimhan and Smid \cite{books/compgeom/narsmid2007}, handbook chapter \cite{hb/cg/Epp99}, and Gudmundsson and Knauer~\cite{hb/apprxheu/GudKnau07} detail various results on Euclidean spanners, including a $(1+\epsilon)$-spanner for the set $S$ of $n$ points in $\mathbb{R}^d$ that has $O(\frac{n}{\epsilon^{d-1}})$ edges for any $\epsilon > 0$.

Many variations of sparse spanners have been studied, including spanners of low degree \cite{journals/cgta/ABCGHSV08,conf/cccg/CarmiChai10,journals/cgta/BCCCKL13,conf/cats/Smid06}, spanners of low weight \cite{journals/algorithmica/BCFMS10,journals/ijcga/DasNara97,journals/siamjc/GudmudLevcoNar02}, spanners of low diameter \cite{conf/focs/AryaMS94,journals/cgta/AryaMountSmid99}, planar spanners \cite{conf/esa/ArikatiCCDSZ96,journals/jcss/Chew89,conf/optalgo/DasJoseph89,conf/wads/KeilGutwin89}, spanners of low chromatic number \cite{journals/cgta/BCCMSZ09}, fault-tolerant spanners \cite{journals/dcg/ABFG09,journals/dcg/CzumajZ04,journals/algorithmica/LevcopoulosNS02,conf/wads/Lukovszki99,journals/corr/KapLi13,conf/stoc/Solomon14}, low power spanners \cite{journals/wireless/Karim11,conf/infocom/SegalShpu10,journals/jco/WangLi06}, kinetic spanners \cite{journals/dcg/AbamBerg11,journals/cgta/ABG10}, angle-constrained spanners \cite{journals/jocg/CarmiSmid12}, and combinations of these \cite{conf/stoc/AryaDMSS95,journals/algorithmica/AryaS97,journals/algorithmica/BFRV18,journals/algorithmica/BoseGudSmid05,journals/ijcga/BoseSmidXu09,journals/corr/CarmiChait10}.
When the doubling dimension of a metric space is bounded, results applicable to the Euclidean settings are given in \cite{conf/stoc/Talwar04}.

As mentioned in Abam et~al., \cite{journals/algorithmica/AbamBFGS11}, the cost of traversing a path in a network is not only determined by the lengths of the edges on the path but also by the delays occurring at the nodes on the path.
The result in \cite{journals/algorithmica/AbamBFGS11} models these delays with the additive weighted metric.
Let $S$ be a set of $n$ points in $\mathbb{R}^d$. 
For every $p \in S$, let $w(p)$ be the non-negative weight associated to $p$. 
The following additive weighted distance function $d_w$ on $S$ defining the metric space $(S, d_w)$ is considered in \cite{journals/algorithmica/AbamBFGS11}: 
for any $p, q \in S$, $d_w(p, q)$ equals to $0$ if $p = q$; otherwise, it is equal to $w(p) + |pq| + w(q)$.
\ignore {
\[
d_w(p, q) =
\begin{cases}
    0 & \text{if } p = q. \\
    w(p) + |pq| + w(q) & \text{if } p \ne q.
\end{cases}
\]
}

Recently, Abam et~al. \cite{conf/soda/AbamBS17} showed that there exists a $(2 + \epsilon)$-spanner with a linear number of edges for the metric space $(S, d_w)$ that has bounded doubling dimension in \cite{journals/siamcomp/Har-PeledM06}.
And, \cite{journals/algorithmica/AbamBFGS11} gives a lower bound on the stretch factor, showing that $(2+\epsilon)$ stretch is nearly optimal.
Bose et~al. \cite{conf/swat/BoseCC08} studied the problem of computing spanners for a weighted set of points.
They considered the points that lie on the plane to have positive weights associated to them; and defined the distance $d_w$ between any two distinct points $p, q \in S$ as $d(p,q) - w(p) - w(q)$. 
Under the assumption that the distance between any pair of points is non-negative, they showed the existence of a $(1 + \epsilon)$-spanner with $O(\frac{n}{\epsilon})$ edges. 

A simple polygon $P_\mathcal{D}$ containing $h \ge 0$ number of disjoint simple polygonal holes within it is termed the {\it polygonal domain} $\mathcal{D}$.
(When $h$ equals to $0$, the polygonal domain $\mathcal{D}$ is a simple polygon.)
The free space $\mathcal{F(D)}$ of the given polygonal domain ${\cal D}$ is defined as the closure of $P_\mathcal{D}$ excluding the union of the interior of polygons contained in $P_\mathcal{D}$.
Essentially, a path between any two given points in $\mathcal{F(D)}$ needs to be in the free space $\mathcal{F(D)}$ of $\mathcal{D}$. 
Given a set $S$ of $n$ points in the free space $\mathcal{F(D)}$ defined by the polygonal domain $\cal D$, computing geodesic spanners in $\mathcal{F(D)}$ is considered in Abam et~al. \cite{conf/compgeom/AbamAHA15}.
For any two distinct points $p, q \in S$, $d_\pi(p, q)$ is defined as the geodesic Euclidean distance along a shortest path $\pi(p, q)$ between $p$ and $q$ in $\mathcal{F(D)}$.
\cite{conf/compgeom/AbamAHA15} showed that for the metric space $(S, \pi)$, for any constant $\epsilon > 0$, there exists a $(5+\epsilon)$-spanner of size $O(\sqrt{h}n (\lg n)^{2})$. 
Further, for any constant $\epsilon > 0$, \cite{conf/compgeom/AbamAHA15} gave a $(\sqrt{10}+\epsilon)$-spanner with $O(n (\lg n)^{2})$ edges when $h = 0$ i.e., the polygonal domain is a simple polygon with no holes. 
Given a set $S$ of $n$ points on a polyhedral terrain $\mathcal{T}$, the geodesic Euclidean distance between any two points $p, q \in S$ is the distance along any shortest path between $p$ and $q$ on $\mathcal{T}$.
\cite{conf/soda/AbamBS17} showed that for a set of unweighted points on a polyhedral terrain, for any constant $\epsilon > 0$, there exists a $(2 + \epsilon)$-geodesic spanner with $O(n \lg n)$ edges.

For a network to be vertex fault-tolerant, i.e., when a subset of nodes is removed, the induced network on the remaining nodes requires to be connected.
Formally, a graph $G(S, E)$ is a {\it $k$-vertex fault-tolerant $t$-spanner}, denoted by $(k,t)$-VFTS, for a set $S$ of $n$ points in $\mathbb{R}^d$ if for any subset $S'$ of $S$ with size at most $k$, the graph $G \setminus S'$ is a $t$-spanner for the points in $S \setminus S'$.
Algorithms in Levcopoulos et~al., \cite{journals/algorithmica/LevcopoulosNS02}, Lukovszki \cite{conf/wads/Lukovszki99}, and Czumaj and Zhao \cite{journals/dcg/CzumajZ04} compute a $(k,t)$-VFTS for the set $S$ of points in $\mathbb{R}^d$.
These algorithms are also presented in \cite{books/compgeom/narsmid2007}.
\cite{journals/algorithmica/LevcopoulosNS02} devised an algorithm to compute a $(k,t)$-VFTS of size $O(\frac{n}{(t-1)^{(2d -1)(k+1)}})$ in $O(\frac{n \lg{n}}{(t-1)^{4d -1}} + \frac{n}{(t-1)^{(2d -1)(k+1)}})$ time 
and another algorithm to compute a $(k,t)$-VFTS with $O(k^{2}n)$ edges in $O(\frac{kn \lg n}{(t-1)^{d}})$ time. 
\cite{conf/wads/Lukovszki99} gives an algorithm to compute a $(k,t)$-VFTS of size $O(\frac{kn}{(t-1)^{d-1}})$ in $O(\frac{1}{(t-1)^{d}}(n \lg^{d-1} n \lg k + k n \lg \lg n))$ time. 
The algorithm in \cite{journals/dcg/CzumajZ04} computes a $(k,t)$-VFTS 
having $O(\frac{kn}{(t-1)^{d-1}})$ edges in $O(\frac{1}{(t-1)^{d-1}}(kn \lg^{d} n + nk^{2} \lg k))$ time with total weight of edges upper bounded by a $O(\frac{k^{2} \lg n}{(t-1)^{d}})$ multiplicative factor of the weight of MST of the given set of points. 

For a real number $t > 1$, a graph $G(S, E)$ is called a {\it $t$-spanner} for the weighted set $S$ of points if for any two points $p$ and $q$ in $S$ the distance between $p$ and $q$ in graph $G$ is at most $t$.$d_w(p, q)$ for a real number $t > 1$.
For some integer $k \geq 1$, a $t$-spanner $G$ for the set $S$ is a {\it $(k, t)$-vertex fault-tolerant additive weighted spanner}, denoted with $(k, t)$-VFTAWS, if for any set $S' \subset S$ with cardinality at most $k$, the graph $G \setminus S'$ is a $t$-spanner for the points in $S \setminus S'$.
\hfil\break

\noindent
{\bf Our results.}
The spanners computed in this paper are first of their kind as we combine fault-tolerance with the additive weighted set of points.
We devise the following algorithms for computing vertex fault-tolerant additive weighted geometric spanners (VFTAWS) for any $\epsilon > 0$ and $k \ge 1$:
\begin{itemize}

\item[*]
Given a set $S$ of $n$ weighted points in $\mathbb{R}^{d}$, our first algorithm presented herewith computes a $(k, 4+\epsilon)$-VFTAWS having $O(kn)$ edges.
We incorporate fault-tolerance to the recent results of \cite{conf/soda/AbamBS17} while retaining the same stretch factor and increasing the number of edges in the spanner by a multiplicative factor of $O(k)$. 

\item[*]
Given a set $S$ of $n$ weighted points in a simple polygon, we present an algorithm to compute a $(k, 4+\epsilon)$-VFTAWS that has $O(\frac{kn}{\epsilon^{2}}\lg{n})$ edges. 
Our algorithm combines the clustering based algorithms from \cite{conf/compgeom/AbamAHA15} and \cite{journals/algorithmica/AbamBFGS11}, and with the careful addition of more edges, we show that $k$ fault-tolerance is achieved.

\item[*]
Given a set $S$ of $n$ weighted points in a polygonal domain, we compute a $(k, 4+\epsilon)$-VFTAWS having $O(\frac{k\sqrt{h}n}{\epsilon^{2}}\lg{n})$ edges.
We extend the structures used in \cite{conf/compgeom/AbamAHA15} to achieve the vertex fault-tolerance. 

\item[*]
Given a set $S$ of $n$ weighted points on a terrain, we computes a $(k, 4+\epsilon)$-VFTAWS having $O(\frac{kn}{\epsilon^{2}}\lg{n})$ edges.
For achieving the fault-tolerance, our algorithm adds a minimal set of edges to the spanner constructed in \cite{conf/soda/AbamBS17}.
\end{itemize}

Unless specified otherwise, the points are assumed to be in Euclidean space $\mathbb{R}^d$.
The Euclidean distance between two points $p$ and $q$ is denoted by $|pq|$.
The distance between two points $p, q$ in the metric space $X$ is denoted by $d_X(p, q)$.
The length of the shortest path between $p$ and $q$ in a graph $G$ is denoted by $d_G(p, q)$.

Section~\ref{sect:rd} details the algorithm and its analysis to compute a $(k, 4+\epsilon)$-VFTAWS when the input weighted points are in $\mathbb{R}^d$.
For the input weighted points located in a simple polygon, Section~\ref{sect:simppoly} describes an algorithm to compute a $(k, 4+\epsilon)$-VFTAWS. 
Section~\ref{sect:polydom} details algorithm for $(k, 4+\epsilon)$-VFTAWS for a points located in the polygonal domain.
Section~\ref{sect:terrains} presents an algorithm to compute a $(k, 4+\epsilon)$-VFTAWS when the input points are located on a terrain.
Conclusions are in Section~\ref{sect:conclu}.

\section{Vertex fault-tolerant additive weighted spanner for points in $\mathbb{R}^d$}
\label{sect:rd}

In this section, we describe an algorithm to compute a $(k,t)$-VFTAWS for the set $S$ of $n$ non-negative weighted points in $\mathbb{R}^d$, where $t > 1$ and $k \geq 1$ are real numbers. 
For any two points $p, q \in S$, the additive weighted distance $d_{w}(p, q)$ is defined as the $w(p) + |pq| + w(q)$.
Following the algorithm mentioned in \cite{conf/soda/AbamBS17}, we first partition all the points belonging to $S$ into at least $k+1$ clusters.
For creating these clusters, the points in set $S$ are sorted in non-decreasing order with respect to their weights. 
Then the first $k+1$ points in this sorted list are chosen as the centers of $k+1$ distinct clusters. 
As the algorithm progress, more points are added to these clusters as well as more clusters (with cluster centers) may also be created.
In any iteration of the algorithm, for any point $p$ in the remaining sorted list, among the current set of cluster centers, we determine the cluster center $c_j$ nearest to $p$.
Let $C_j$ be the cluster to which $c_j$ is the center.
It adds $p$ to the cluster $C_j$ if $|p c_j| \leq \epsilon.w(p)$; otherwise, a new cluster $C_p$ with $p$ as its centre is initiated. 
Let $C=\{ c_1, \ldots, c_z \}$ be the final set of cluster centers obtained through this procedure.
For every $i \in [1, z]$, the cluster to which $c_i$ is the center is denoted by $C_i$.
Using the algorithm from \cite{conf/stoc/Solomon14}, we compute a $(k, (2+\epsilon))$-VFTS $\mathcal{B}$ for the set $C$ of cluster centers.
We note that the degree of each vertex of $\mathcal{B}$ is $O(k)$.
We denote the stretch of $\cal{B}$ by $t_{\cal{B}}$.
First, the graph ${\cal G}$ is initialized to $\mathcal{B}$; further, points in $S \setminus C$ are included in ${\cal G}$ as vertices.
Our algorithm to compute a $(k, 4+\epsilon)$-VFTAWS differs from \cite{conf/soda/AbamBS17} with respect to both the algorithm used in computing $\mathcal{B}$ and the set of edges added to ${\cal B}$. 
The latter part is described now.
For every $i \in [1, z]$, let $C_i'$ be the set comprising of $\min\{k+1, |C_i|\}$ least weighted points of cluster $C_i$. 
For each point $p \in S \setminus C$, if $p$ belongs to cluster $C_l$, then for each $v \in B_l \cup C_l'$, our algorithm introduces an edge between $p$ and $v$ with weight $|pv|$ to ${\cal G}$. 
Here, $B_l$ is the set comprising of all the neighbors of the center ($c_l$) of cluster $C_l$ in the graph $\cal{B}$.
We state our algorithm to compute a $(k,(2+\epsilon))$-vertex fault tolerant spanner(VFTS) for the set $S$.

\begin{algorithm}[H]
    
    \SetKwInOut{Input}{Input}
    \SetKwInOut{Output}{Output}
    
    \Input{The set $S$ of $n$ additive weighted points in $\mathbb{R}^d$ and an integer $k \geq 1$ and a real number $\epsilon$ lying in the range $[\frac{\lg k}{\lg \frac{n}{k+1} + 1},1]$ if $n \geq k(k+1)$ else $\epsilon$ is in the range $[\frac{k+1}{n}, 1]$
	}
    
    \Output{$(k,(4+\epsilon))$-VFTS $G$}
    
    \begin{algorithmic}[1]    
    
    \STATE create the list $P$ containing the points of $S$ sorted in non-decreasing order of their weights
    
    \STATE if $n \geq k(k+1)$, then create $\frac{n}{k^{\frac{1}{\epsilon}}}$ clusters; otherwise, create $\epsilon.n$ clusters
    
    \STATE assign the first point $q \in P$ to be the centre $c_1$ of the first cluster $C_1$ and set $z$ to $1$
    
    \STATE add $q$ to set $C$ of cluster centres
    
    \FOR{every point $p \in P \setminus \{ q \}$}
        
        \STATE find the cluster centre $c_j$ from the set $C$ nearest to $p$
        
        \IF{$|p c_j| \leq \epsilon. w(p)$}
            
            \STATE add $p$ to the cluster $C_j$
            
        \ELSE
            
            \STATE set $z$ as $z+1$
            
            \STATE create a new cluster $C_{z}$ and assign $p$ as the centre $c_{z}$ of $C_{z}$
            
            \STATE  add $p$ to the set $C$
            
        \ENDIF
        
    \ENDFOR
    
    \STATE construct a $(k,(1+\epsilon))$-VFTS $\mathcal{B}$ for the set $C$ using the algorithm given in \cite{conf/stoc/Solomon14}
    
    \FOR{every $p \in S \setminus C$}
        
        \STATE find the cluster $C_l$ to which $p$ belongs; let $c_l$ be the centre of $C_l$
        
	\STATE let $C_l'$ be the set of $min \{ k+1, card(C_l) \}$ least weighted points of cluster $C_l$; for every $p' \in C_l'$, introduce an edge between $p$ and $p'$
        
        \STATE let $B_l$ be the $k$ nearest neighbors of $c_l$ in $\mathcal{B}$; for every $p' \in B_l$, introduce an edge between $p$ and $p'$
    
    \ENDFOR
    
    \end{algorithmic}

    \caption{$k$-\large A\small DDITIVE\large FTS($S, k, \epsilon$)}
    \label{alg:addfts}
    
\end{algorithm}

In the following theorem, we prove that the graph ${\cal G}$ is indeed a $(k, 4+\epsilon)$-VFTAWS with $O(kn)$ edges.

\begin{theorem}
\label{thm:rd}
Let $S$ be a set of $n$ weighted points in $\mathbb{R}^d$ with non-negative weights associated to points with weight function $w$.
For any fixed constant $\epsilon > 0$, the graph ${\cal G}$ is a $(k, (4+\epsilon))$-VFTAWS with $O(kn)$ edges for the metric space $(S, d_w)$.
\end{theorem}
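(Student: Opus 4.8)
The plan is to prove the two assertions of the theorem separately: the $O(kn)$ bound on $|E({\cal G})|$, and the property that ${\cal G}\setminus S'$ is a $(4+\epsilon)$-spanner for $(S\setminus S',d_w)$ for every $S'\subseteq S$ with $|S'|\le k$. The edge count is immediate: by \cite{conf/stoc/Solomon14} the backbone ${\cal B}$ has maximum degree $O(k)$, hence $O(k|C|)=O(kn)$ edges, and in the second \textbf{for}-loop each $p\in S\setminus C$ gains at most $|C_l'|+|B_l|=O(k)$ incident edges; summing over the $n$ points gives $O(kn)$. So the whole of the work is the stretch bound.

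I would first isolate the elementary clustering facts that drive everything. If $x$ lies in cluster $C_l$ with centre $c_l$ then (i) $|xc_l|\le\epsilon\,w(x)$ and (ii) $w(c_l)\le w(x)$, since $c_l$ was made a centre no later than $x$ was inserted. Hence $d_w(x,c_l)\le(2+\epsilon)\,w(x)$, and, using that $d_w$ obeys the triangle inequality, $d_w(c_a,c_b)\le |pq|+(1+\epsilon)\bigl(w(p)+w(q)\bigr)$ whenever $p\in C_a,\ q\in C_b$; I would also use $d_w(p,q)=w(p)+|pq|+w(q)$. Now fix $S'$, $|S'|\le k$, and $p,q\in S\setminus S'$ with clusters $C_a,C_b$ and centres $c_a,c_b$. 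In the case $c_a,c_b\notin S'$: since ${\cal B}$ is a $(k,2+\epsilon)$-VFTS, ${\cal B}\setminus S'$ contains a $c_a$--$c_b$ path of $d_w$-length at most $(2+\epsilon)d_w(c_a,c_b)$; prefixing edge $pc_a$ (present because $c_a\in C_a'$) and suffixing $c_bq$ gives a path in ${\cal G}\setminus S'$ of length at most $d_w(p,c_a)+(2+\epsilon)d_w(c_a,c_b)+d_w(c_b,q)\le (2+\epsilon)^2\bigl(w(p)+w(q)\bigr)+(2+\epsilon)|pq|\le (2+\epsilon)^2 d_w(p,q)$, since $\bigl(w(p)+w(q)\bigr)+|pq|=d_w(p,q)$.

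The heart of the argument is accommodating a faulty centre. Suppose $c_a\in S'$ (the cases $c_b\in S'$ and $c_a,c_b\in S'$ are analogous). Because $|S'\setminus\{c_a\}|\le k-1$ and ${\cal B}$ is a $(k,2+\epsilon)$-VFTS, ${\cal B}\setminus(S'\setminus\{c_a\})$ still contains $c_a$ and $c_b$ and a path $c_a,u_1,\dots,u_s,c_b$ between them of $d_w$-length at most $(2+\epsilon)d_w(c_a,c_b)$, with every $u_i\notin S'$. The crucial observation is that $u_1$ is a neighbour of $c_a$ in ${\cal B}$, so $u_1\in B_a$ and the edge $pu_1$ lies in ${\cal G}\setminus S'$ (if $s=0$ use $pc_b$, again since $c_b\in B_a$). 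Replacing the backbone's first edge $c_au_1$ by $pu_1$ and appending $c_bq$ yields a valid $p$--$q$ path in ${\cal G}\setminus S'$, and the telescoping estimate $d_w(p,u_1)-d_w(c_a,u_1)\le |pc_a|+w(p)-w(c_a)\le(1+\epsilon)w(p)$ shows its length is at most $(1+\epsilon)w(p)+(2+\epsilon)d_w(c_a,c_b)+d_w(c_b,q)$ --- i.e.\ it is in fact \emph{no larger} than the fault-free estimate, hence still $\le(2+\epsilon)^2 d_w(p,q)$; when both centres fail one performs this substitution at both ends ($u_1\in B_a$, $u_s\in B_b$), which only tightens the bound. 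The remaining case is $a=b$: if $c_a\notin S'$ the path $p,c_a,q$ suffices; if $c_a\in S'$ then, since $|C_a'|=\min\{k+1,|C_a|\}>k$, either $p$ and $q$ are joined directly by an edge of ${\cal G}$ or they share a surviving neighbour $p^\ast\in C_a'\setminus S'$, and in both cases the path length is $O(1)\cdot\bigl(w(p)+w(q)\bigr)\le(4+\epsilon)d_w(p,q)$.

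The main obstacle I anticipate is the faulty-centre case, and within it the degenerate sub-case where $c_a$ and $c_b$ are adjacent in ${\cal B}$ and both are removed: one must still exhibit a surviving neighbour of $c_a$ reachable from $p$ that is joined in ${\cal B}\setminus S'$ to a surviving neighbour of $c_b$ reachable from $q$ without inflating the $d_w$-distance. This is exactly where I would lean on the structural guarantees of Solomon's construction --- every centre has $\Theta(k)$ neighbours in ${\cal B}$ and this neighbourhood is robust enough that, after at most $k$ deletions, $c_a$ keeps a neighbour from which the surviving fault-tolerant spanner still reaches $c_b$'s surviving neighbourhood cheaply --- so that the telescoping above goes through unchanged. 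The only other bookkeeping is the final rescaling of $\epsilon$: the bound obtained is $(2+\epsilon)^2 d_w(p,q)=(4+4\epsilon+\epsilon^2)d_w(p,q)$, so running the algorithm with a suitably scaled $\epsilon$ (e.g.\ replacing $\epsilon$ by $\epsilon/5$ for $\epsilon\le1$) delivers stretch exactly $4+\epsilon$.
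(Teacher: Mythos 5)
Your proposal follows essentially the same route as the paper: the same two clustering facts ($|xc_l|\le\epsilon\,w(x)$ and $w(c_l)\le w(x)$), the same edge-count argument, the same case split on which of the two cluster centres survive, the same reroute-through-a-surviving-neighbour idea, and the same final bound $(2+\epsilon)^2 d_w(p,q)$ followed by rescaling $\epsilon$. Your handling of a \emph{single} failed centre is in fact a little cleaner than the paper's: deleting $S'\setminus\{c_a\}$ (of size $\le k-1$) and invoking the VFTS property of $\mathcal{B}$ directly to get a surviving $c_a$--$c_b$ spanner path whose first vertex $u_1$ lies in $B_a$, with the telescoping estimate $d_w(p,u_1)\le d_w(c_a,u_1)+(1+\epsilon)w(p)$, is exactly the right computation and avoids any appeal to disjoint paths.

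The one place your argument is not closed is the sub-case you yourself flag: both $c_a,c_b\in S'$ and the surviving spanner path in $\mathcal{B}\setminus(S'\setminus\{c_a,c_b\})$ is the single edge $(c_a,c_b)$, so there is no internal vertex $u_1$ to splice $p$ and $q$ onto. You defer this to unspecified ``structural guarantees of Solomon's construction,'' which leaves a gap. The paper closes this case (its Case 8) by asserting that a $(k,t_{\mathcal{B}})$-VFTS contains $k+1$ internally-vertex-disjoint $t_{\mathcal{B}}$-spanner paths between $c_a$ and $c_b$; granting that property, the fix is a two-line counting argument: the at most $k-2$ faults in $S'\setminus\{c_a,c_b\}$ kill at most $k-2$ of these paths, at most one of the $k+1$ paths is the bare edge $(c_a,c_b)$, so at least two surviving paths have internal vertices, and the first and last internal vertices of such a path are surviving members of $B_a$ and $B_b$ respectively, to which $p$ and $q$ are adjacent; your telescoping bound then applies at both ends unchanged. (The disjoint-paths property is itself invoked without proof in the paper --- it is not a formal consequence of the VFTS definition, since length-bounded Menger fails in general --- so both arguments ultimately rest on the same external fact about $\mathcal{B}$; you should state it explicitly rather than gesture at it.)
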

\begin{proof}
From \cite{conf/stoc/Solomon14}, the number of edges in $\mathcal{B}$ is $O(k \hspace{0.02in} |C|)$, which is essentially $O(kn)$.
Further, the degree of each node in $\mathcal{B}$ is $O(k)$.
From each point in $S \setminus C$, we are adding at most $O(k)$ edges.
Hence, the number of edges in ${\cal G}$ is $O(kn)$.

In proving that ${\cal G}$ is a $(k, 4+\epsilon)$-VFTAWS for the metric space $(S, d_w)$, we show that for any set $S' \subset S$ with $|S'| \leq k$ and for any two points $p, q \in S \setminus S'$ there exists a $(4+\epsilon)$-spanner path between $p$ and $q$ in ${\cal G} \setminus S'$. 
Following are the possible cases based on the role $p$ and $q$ play with respect to clusters formed and their centers:

\noindent
{\it Case 1}: Both $p$ and $q$ are cluster centres of two distinct clusters i.e., $p,q \in C$.

\noindent
Since $\mathcal{B}$ is a $(k, (2+\epsilon))$-VFTS for the set $C$,
{\setlength{\abovedisplayskip}{0pt}
\begin{flalign}
\hspace{6mm}d_{\mathcal{{\cal G}} \setminus S'}(p,q) &=d_{\mathcal{B} \setminus S'}(p,q)&&\nonumber\\
            &\leq t_{\mathcal{B}} \cdot d_w(p,q).&\nonumber
\end{flalign}}

\noindent
{\it Case 2}: Both $p$ and $q$ are in the same cluster $C_i$ and one of them, w.l.o.g., say $p$, is the centre of $C_i$.
\noindent
Since $p$ is the least weighted point in $C_i$, there exists an edge joining $p$ and $q$ in ${\cal G}$.
Hence,
{\setlength{\abovedisplayskip}{0pt}
\begin{flalign}
\hspace{6mm}d_{\mathcal{G} \setminus S'}(p,q) &= d_w(p,q).&\nonumber
\end{flalign}}

\noindent
{\it Case 3}: Both $p$ and $q$ are in the same cluster, say $C_i$; $p \ne c_i$, $q \ne c_i$; and, $c_i \notin S'$.
Then,

{\setlength{\abovedisplayskip}{0pt}
\begin{flalign}
\hspace{6mm}d_{\mathcal{G} \setminus S'}(p,q) &= d_w(p,c_i) + d_w(c_i, q)&&\nonumber\\
            &= w(p) + |p c_i| + w(c_i) + w(c_i) + |c_i q| + w(q)&&\nonumber\\
            &\leq w(p) + \epsilon \cdot w(p) + w(c_i) + w(c_i) + \epsilon \cdot w(q) + w(q)&&\nonumber\\
            &[\text{since a point} \ x \ \text{is added to cluster} \ C_l \ \text{only if} \ |x c_l| \leq \epsilon \cdot w(x)]&&\nonumber\\
            &\leq w(p) + \epsilon \cdot w(p) + w(p) + w(q) + \epsilon \cdot w(q) + w(q)&&\nonumber\\
            &[\text{since the points are sorted in the non-decreasing order of their weights and the first}&&\nonumber\\ 
            &\text{point added to any cluster is taken as its center}]&&\nonumber\\
            &= (2 + \epsilon) \cdot [w(p) + w(q)]&&\nonumber\\
            &< (2 + \epsilon) \cdot [w(p) + |pq| + w(q)]&&\nonumber\\
            &= (2 + \epsilon) \cdot d_w(p,q).&\nonumber
\end{flalign}}

\noindent
{\it Case 4}: Both $p$ and $q$ are in the same cluster, say $C_i$; $p \ne c_i$, $q \ne c_i$; and, $c_i \in S'$.\\
\noindent
In the case of $|C_i| \leq k$, there exists an edge between $p$ and $q$ in ${\cal G}$.
Hence, suppose that $|C_i| > k$.
Let $S''$ be the set of $k+1$ least weighted points from $C_i$.
If $p,q \in S''$ then there exists an edge between $p$ and $q$ in ${\cal G}$. 
If $p \in S''$ and $q \notin S''$ then as well there exists an edge between $p$ and $q$.
(Argument for the other case in which $q \in S''$ and $p \notin S''$ is analogous.)
Now consider the case in which both $p, q \notin S''$.  
Since $p$ and $q$ are connected to every point in $S''$ and $|S''|=k+1$, there exists an $r \in S''$ such that $r \notin S'$ and the edges $(p, r)$ and $(r, q)$ belong to ${\cal G} \setminus S'$.
Therefore,
{\setlength{\abovedisplayskip}{0pt}
\begin{flalign}
\hspace{6mm}d_{\mathcal{G} \setminus S'}(p,q) &= d_w(p,r) + d_w(r, q)&&\nonumber\\
            &= w(p) + |p r| + w(r) + w(r) + |r q| +w(q)&&\nonumber\\
            &\leq w(p) + |p c_i| + |c_i r| + w(r) + w(r) + |r c_i| + |c_i q| + w(q)&&\nonumber\\
            &\text{[by triangle inequality]}&&\nonumber\\
            &\leq w(p) + \epsilon \cdot w(p) + \epsilon \cdot w(r) + w(r) + w(r) + \epsilon \cdot w(r) + \epsilon \cdot w(q) + w(q)&&\nonumber\\
            &[\text{since a point} \ x \ \text{is added to cluster} \ C_l \ \text{only if} \ |x c_l| \leq \epsilon \cdot w(x)]&&\nonumber\\
            &\leq w(p) + \epsilon \cdot w(p) + \epsilon \cdot w(p) + w(p) + w(q) + \epsilon \cdot w(q) + \epsilon \cdot w(q) + w(q)&&\nonumber\\
            &\text{[since for any point the edges are added to the} \ k+1 \ \text{least weighted}&&\nonumber\\ &\text{points of the cluster to which it belongs]}&&\nonumber\\
            &= (2 + 2\epsilon) \cdot [w(p) + w(q)]&&\nonumber\\
            &< (2 + 2\epsilon) \cdot [w(p) + |pq| + w(q)]&\nonumber\\
            &= (2+2\epsilon) \cdot d_w(p,q).&\nonumber
\end{flalign}}

\noindent
{\it Case 5}: Points $p$ and $q$ belong to two distinct clusters, say $p \in C_i$ and $q \in C_j$. 
In addition, $p \neq c_i$ and $q \ne c_j$, and neither of the cluster centres belong to $S'$.

\noindent
Then,
{\setlength{\abovedisplayskip}{0pt}
\begin{flalign}
\hspace{6mm}d_{\mathcal{G} \setminus S'}(p,q) &= d_w(p,c_i) + d_{\mathcal{B}}(c_i, c_j) + d_w(c_j, q)&&\nonumber\\
            &= w(p) + |p c_i| + w(c_i) + d_{\mathcal{B}}(c_i, c_j) + w(c_j) + |c_j q| +w(q)&&\nonumber\\
            &\leq w(p) + \epsilon \cdot w(p) + w(c_i) + d_{\mathcal{B}}(c_i, c_j) + w(c_j) + \epsilon \cdot w(q) + w(q)&&\nonumber\\
            &[\text{since a point} \ x \ \text{is added to cluster} \ C_l \ \text{only if} \ |x c_l| \leq \epsilon \cdot w(x)]&&\nonumber\\
            &\leq (1 + \epsilon) \cdot [w(p) + w(q)] + w(c_i) + w(c_j) + t_{\mathcal{B}} \cdot d_w(c_i,c_j)&&\nonumber\\
            &[\text{since} \ \mathcal{B} \ \text{is a} \ (k,t_{\mathcal{B}})\text{-vertex fault-tolerant spanner for the set} \ C]&&\nonumber
\end{flalign}}

{\setlength{\abovedisplayskip}{0pt}
\begin{flalign}
            &\leq (1 + \epsilon) \cdot [w(p) + w(q)] + w(p) + w(q) + t_{\mathcal{B}} \cdot d_w(c_i,c_j)&&\nonumber\\
            &\text{[since the points are sorted in the non-decreasing order of their weights and the first}&&\nonumber\\ 
            &\text{point added to any cluster is taken as center of that cluster]}&&\nonumber\\
            &= (2 + \epsilon) \cdot [w(p) + w(q)] + t_{\mathcal{B}} \cdot [w(c_i) + |c_i c_j| + w(c_j)]&&\nonumber\\
            &\leq (2 + \epsilon) \cdot [w(p) + w(q)] + t_{\mathcal{B}} \cdot [w(p) + |c_i c_j| + w(q)]&&\nonumber\\
            &\text{[since the points are sorted in the non-decreasing order of their weights and the first}&&\nonumber\\ 
            &\text{point added to any cluster is taken as its center]}&&\nonumber\\
            &\leq (2 + \epsilon) \cdot [w(p) + w(q)] + t_{\mathcal{B}} \cdot [w(p) + w(q) + |c_i p| + |pq| + |q c_j|]&&\nonumber\\
            &\text{[by triangle inequality]}&&\nonumber\\
            &\leq (2 + \epsilon) \cdot [w(p) + w(q)] + t_{\mathcal{B}} \cdot [w(p) + w(q) + \epsilon \cdot w(p) + |pq| + \epsilon \cdot w(q)]&&\nonumber\\ &[\text{since a point} \ x \ \text{is added to cluster} \ C_l \ \text{only if} \ |x c_l| \leq \epsilon \cdot w(x)]&&\nonumber\\
            &= (2 + \epsilon) \cdot [w(p) + w(q)] + t_{\mathcal{B}} \cdot [(1 + \epsilon) \cdot [w(p) + w(q)] + |pq|]&\nonumber\\
           \hspace{22mm}&< (2 + \epsilon) \cdot [w(p) + w(q) + |pq|] + t_{\mathcal{B}} \cdot (1 + \epsilon) \cdot [w(p) + w(q) + |pq|]&&\nonumber\\
       &< t_{\mathcal{B}}(2 + \epsilon) \cdot [w(p) + w(q) + |pq|] \hspace{0.1in} \text{ when } t_B \ge (2+\epsilon) &&\nonumber\\
            &\text{[since each point has non-negative weight associated with it]}&&\nonumber\\
            &= t_{\mathcal{B}} \cdot (2 + \epsilon) \cdot d_w(p,q).&\nonumber
\end{flalign}}

\noindent
{\it Case 6}: Both the points $p$ and $q$ are in two distinct clusters, w.l.o.g., say $p \in C_i$ and $q \in C_j$, one of them, say $p$ is the centre of $C_i$ (i.e., $p = c_i$), and $c_j \notin S'$.
Then,
{\setlength{\abovedisplayskip}{0pt}
\begin{flalign}
\hspace{5mm}d_{\mathcal{G} \setminus S'}(p,q) &= d_{\mathcal{B}}(c_i,c_j) + d_w(c_j, q)&&\nonumber\\
            &\leq t_{\mathcal{B}} \cdot d_w(c_i, c_j) + d_w(c_j, q)&&\nonumber\\
            &[\text{since} \ \mathcal{B} \ \text{is a} \ (k,t_{\mathcal{B}})\text{-vertex fault-tolerant spanner for the set} \ C]&&\nonumber\\
            &= t_{\mathcal{B}} \cdot d_w(c_i, c_j) + w(c_j) + |c_j q| + w(q)&&\nonumber\\
            &= t_{\mathcal{B}} \cdot [w(c_i) + |c_i c_j| + w(c_j)] + w(c_j) + |c_j q| + w(q)&&\nonumber\\
            &\leq t_{\mathcal{B}} \cdot [w(p) + |c_i c_j| + w(q)] + w(q) + |c_j q| + w(q)&&\nonumber\\            
            &\text{[since the points are sorted in the non-decreasing order of their weights and the first}&&\nonumber\\ 
            &\text{point added to any cluster is taken as its center]}&&\nonumber\\
            &\leq t_{\mathcal{B}} \cdot [w(p) + |c_i c_j| + w(q)] + w(q) + \epsilon \cdot w(q) + w(q)&&\nonumber\\
            &[\text{since a point} \ x \ \text{is added to cluster} \ C_l \ \text{only if} \ |x c_l| \leq \epsilon \cdot w(x)]&&\nonumber\\
            &\leq t_{\mathcal{B}} \cdot [w(p) + |c_i p| + |pq| + |q c_j| + w(q)] + w(q) + \epsilon \cdot w(q) + w(q)&&\nonumber\\
            &\text{[by triangle inequality]}&&\nonumber\\ 
            &\leq t_{\mathcal{B}} \cdot [w(p) + \epsilon \cdot w(p) + |pq| + \epsilon \cdot w(q) + w(q)] + w(q) + \epsilon \cdot w(q) + w(q)&&\nonumber\\
            &[\text{since a point} \ x \ \text{is added to cluster} \ C_l \ \text{only if} \ |x c_l| \leq \epsilon.w(x)]&&\nonumber\\
            &= t_{\mathcal{B}} \cdot [(1 + \epsilon) \cdot [w(p) + w(q)] + |pq|] + (2 + \epsilon) \cdot w(q)&&\nonumber\\
            &\leq t_{\mathcal{B}} \cdot [(1 + \epsilon) \cdot [w(p) + w(q)] + |pq|] + (2 + \epsilon) \cdot [w(p) + w(q) + |pq|]&&\nonumber\\
            &\text{[since each point has non-negative weight associated with it]}&&\nonumber\\
        &\leq t_{\mathcal{B}} \cdot (2 + \epsilon) \cdot [w(p) + w(q) + |pq|] \hspace{0.1in} \text{ when } t_{\mathcal{B}} \ge (2+\epsilon) &&\nonumber\\
            &\leq t_{\mathcal{B}} \cdot (2 + \epsilon) \cdot d_w(p,q).&\nonumber
\end{flalign}}

\noindent
{\it Case 7}: Both the points $p$ and $q$ are in two distinct clusters, say $p \in C_i$ and $q \in C_j$; $p \neq c_i$, $q \neq c_j$; and, one of these centers, say $c_j$, belongs to $S'$ and the other center $c_i \notin S'$.
Since $q$ is connected to all the neighbors of $c_j$ in $\mathcal{B}$, for any neighbor $c_r$ of $c_j$ with $c_r \in C$ and $c_r \notin S'$, the edge $(q, c_r) $ belongs to ${\cal G} \setminus S'$.  
Therefore,
{\setlength{\abovedisplayskip}{0pt}
\begin{flalign}
\hspace{6mm}d_{\mathcal{G} \setminus S'}(p,q) &= d_w(p,c_i) + d_{\mathcal{B}}(c_i, c_r) + d_w(c_r, q)&&\nonumber\\
        &= w(p) + |p c_i| + w(c_i) + d_{\mathcal{B}}(c_i, c_r) + w(c_r) + |c_r q| + w(q)&\nonumber\\
        &\leq w(p) + \epsilon \cdot w(p) + w(c_i) + d_{\mathcal{B}}(c_i, c_r) + w(c_r) + |c_r q| + w(q)&&\nonumber\\
        &[\text{since a point} \ x \ \text{is added to cluster} \ C_l \ \text{only if} \ |x c_l| \leq \epsilon \cdot w(x)]&&\nonumber\\
        &\leq w(p) + \epsilon \cdot w(p) + w(c_i) + d_{\mathcal{B}}(c_i, c_r) + w(c_r) + |c_r c_j| + |c_j q| + w(q)&&\nonumber\\
        &\text{[by triangle inequality]}&\nonumber\\
        &\leq w(p) + \epsilon \cdot w(p) + w(c_i) + d_{\mathcal{B}}(c_i, c_r) + w(c_r) + |c_r c_j| + \epsilon \cdot w(q) + w(q)&&\nonumber\\
        &[\text{since a point} \ x \ \text{is added to cluster} \ C_l \ \text{only if} \ |x c_l| \leq \epsilon \cdot w(x)]&&\nonumber\\
        &\leq (1 + \epsilon) \cdot [w(p) + w(q)] + w(c_i) + d_{\mathcal{B}}(c_i, c_r) + w(c_r) + |c_j c_r|\label{eq1}.&   
\end{flalign}}

\noindent
Since $\mathcal{B}$ is a $(k,t_{\mathcal{B}})$-VFTS, there are at least $k+1$ vertex disjoint $t_{\mathcal{B}}$-spanner paths between $c_j$ and $c_i$ in $\mathcal{B}$. 
Suppose $c_r$ is the neighbor of $c_j$ in $\mathcal{B}$ such that one of these $k+1$ paths from $c_j$ to $c_i$ passes through $c_r$.
We have the following:
{\setlength{\abovedisplayskip}{0pt}
\begin{flalign}
&\hspace{5mm}d_{\mathcal{B}}(c_i , c_r) + d_w(c_r , c_j) < t_{\mathcal{B}} \cdot d_w(c_i , c_j)&&\nonumber\\
\hspace{5mm}&\Rightarrow d_{\mathcal{B}}(c_i , c_r) + w(c_r) + |c_r  c_j| + w(c_j) < t_{\mathcal{B}} \cdot d_w(c_i , c_j)&&\nonumber\\
\hspace{5mm}&\Rightarrow d_{\mathcal{B}}(c_i , c_r) + |c_r c_j| + w(c_r) < t_{\mathcal{B}} \cdot d_w(c_i , c_j) - w(c_j).\label{eq2}&
\end{flalign}}
Substituting (\ref{eq2}) in (\ref{eq1}),
{\setlength{\abovedisplayskip}{0pt}
\begin{flalign}
\hspace{5mm}d_{\mathcal{G} \setminus S'}(p,q) &< (1 + \epsilon) \cdot [w(p) + w(q)] + w(c_i) + t_{\mathcal{B}} \cdot d_w(c_i , c_j) - w(c_j)&&\nonumber\\
        &\leq (1 + \epsilon) \cdot [w(p) + w(q)] + w(c_i) + t_{\mathcal{B}} \cdot d_w(c_i , c_j)&&\nonumber\\
        &\text{[since weight associated with any point is non-negative]}&&\nonumber\\
        &=(1 + \epsilon) \cdot [w(p) + w(q)] + w(c_i) + t_{\mathcal{B}} \cdot [w(c_i) + |c_i c_j| + w(c_j)]&&\nonumber\\
        &\leq (1 + \epsilon) \cdot [w(p) + w(q)] + w(p) + t_{\mathcal{B}} \cdot [w(p) + |c_i c_j| + w(q)]&&\nonumber\\
        &\text{[since the points are sorted in the non-decreasing order of their weights and the first}&&\nonumber\\
        &\text{point added to any cluster is taken as its center]}&&\nonumber\\
        &\leq (1 + \epsilon) \cdot [w(p) + w(q)] + w(p) + t_{\mathcal{B}} \cdot [w(p) + |c_i p| + |pq| + |q c_j| + w(q)]&&\nonumber\\
        &\text{[by triangle inequality]}&&\nonumber\\
        &\leq (1 + \epsilon) \cdot [w(p) + w(q)] + w(p) + t_{\mathcal{B}} \cdot [w(p) + \epsilon \cdot w(p) + |pq| + \epsilon \cdot w(q) + w(q)]&&\nonumber\\
        &[\text{since a point} \ x \ \text{is added to cluster} \ C_l \ \text{only if} \ |x c_l| \leq \epsilon \cdot w(x)]&&\nonumber\\
        &\leq (2 + \epsilon) \cdot [w(p) + w(q)] + t_{\mathcal{B}} \cdot [(1 + \epsilon) \cdot [w(p) + w(q)] + |pq|]&&\nonumber\\
    &\leq t_{\mathcal{B}} \cdot [(2 + \epsilon) \cdot [w(p) + w(q)] + |pq|] \hspace{0.1in} \text{ when } t_{\mathcal{B}} \ge (2+\epsilon) &&\nonumber\\
        &\leq t_{\mathcal{B}} \cdot (2 + \epsilon) \cdot [w(p) + w(q)] + t_{\mathcal{B}}[w(p) + |pq| + w(q)]&&\nonumber\\
        &\text{[since weight associated with any point is non-negative]}&&\nonumber\\
        &\leq t_{\mathcal{B}} \cdot (2 + \epsilon) \cdot d_w(p,q).&\nonumber
\end{flalign}}

\noindent
{\it Case 8}: Points $p$ and $q$ are in two distinct clusters, say $p \in C_i$ and $q \in C_j$; $p \neq c_i$, $q \neq c_j$; and both $c_i, c_j \in S'$.

\noindent
Since $p$ (resp. $q$) is connected to $k$ nearest neighbor of $c_i$ (resp. $c_j$), there exists $c_r, c_l \in C$ such that $c_r,c_l \notin S'$ and the edges $(p, c_r)$ and $(c_l, q)$ belong to ${\cal G} \setminus S'$.
Then,

{\setlength{\abovedisplayskip}{0pt}
\begin{flalign}
\hspace{5mm}d_{\mathcal{G} \setminus S'}(p,q) &= d_w(p,c_r) + d_{\mathcal{B}}(c_r, c_l) + d_w(c_l, q)&&\nonumber\\
        &= w(p) + |p c_r| + w(c_r) + d_{\mathcal{B}}(c_r, c_l) + w(c_l) + |c_l q| + w(q)&\nonumber\\
\hspace{25mm}&\leq w(p) + |p c_i| + |c_i c_r| + w(c_r) + d_{\mathcal{B}}(c_r, c_l)&\nonumber\\
    &\hspace{0.2in}+ w(c_l) + |c_j c_l| + |c_j q| + w(q)\label{eq3}&&\\
        &\text{[by triangle inequality]}.&\nonumber
\end{flalign}}

\noindent
Since $\mathcal{B}$ is a $(k,t_{\mathcal{B}})$-VFTS, there are at least $k+1$ vertex disjoint $t_{\mathcal{B}}$-spanner paths between $c_j$ and $c_i$ in $\mathcal{B}$.  
Suppose $c_r$ (resp. $c_l$) is the neighbor of $c_i$ (resp. $c_j$) in $\mathcal{B}$ such that one of these $k+1$ paths from $c_j$ to $c_i$ passes through $c_r$ (rsp. $c_l$).
We have the following:

{\setlength{\abovedisplayskip}{0pt}
\begin{flalign}
&\hspace{5mm}d_w(c_i,c_r) + d_{\mathcal{B}}(c_r,c_l) + d_w(c_l,c_j) < t_{\mathcal{B}} \cdot d_w(c_i,c_j)&&\nonumber\\
\hspace{5mm}&\Rightarrow w(c_r) + |c_i c_r| + w(c_i) + d_{\mathcal{B}}(c_r , c_l) + w(c_j) + |c_j c_l| + w(c_l) < t_{\mathcal{B}} \cdot d_w(c_i , c_j)&&\nonumber\\
\hspace{5mm}&\Rightarrow w(c_r) + |c_i c_r| + d_{\mathcal{B}}(c_r , c_l) + w(c_l) + |c_j c_l|  < t_{\mathcal{B}} \cdot d_w(c_i , c_j) - w(c_i) - w(c_j).\label{eq4}&
\end{flalign}}
Substituting (\ref{eq4}) in (\ref{eq3}), we get
{\setlength{\abovedisplayskip}{0pt}
\begin{flalign}
\hspace{5mm}d_{\mathcal{G} \setminus S'}(p,q) &< w(p) + |p c_i| + t_{\mathcal{B}} \cdot d_w(c_i , c_j) - w(c_i) - w(c_j) + |c_j q| + w(q)&&\nonumber\\
        &\leq w(p) + |p c_i| + t_{\mathcal{B}} \cdot d_w(c_i , c_j) + |c_j q| + w(q)&&\nonumber\\
        &\text{[since the weight associated with each point is non-negative]}&&\nonumber\\
        &= w(p) + |p c_i| + t_{\mathcal{B}} \cdot [w(c_i) + |c_i c_j| + w(c_j)] + |c_j q| + w(q)&&\nonumber\\
        &\leq w(p) + |p c_i| + t_{\mathcal{B}} \cdot [w(c_i) + |c_i p| + |pq| + |q c_j| + w(c_j)] + |c_j q| + w(q)&&\nonumber\\
        &\text{[by triangle inequality]}&&\nonumber\\
        &\leq w(p) + \epsilon \cdot w(p) + t_{\mathcal{B}} \cdot [w(c_i) + \epsilon \cdot w(p) + |pq| + \epsilon \cdot w(q) + w(c_j)] + \epsilon \cdot w(q) + w(q)&&\nonumber\\
        &[\text{since a point} \ x \ \text{is added to cluster} \ C_l \ \text{only if} \ |x c_l| \leq \epsilon \cdot w(x)]&&\nonumber\\
        &\leq w(p) + \epsilon \cdot w(p) + t_{\mathcal{B}} \cdot [w(p) + \epsilon \cdot w(p) + |pq| + \epsilon \cdot w(q) + w(q)] + \epsilon \cdot w(q) + w(q)&&\nonumber\\
        &[\text{since the points are sorted in the non-decreasing order of their weights and the first}&&\nonumber\\
        &\text{point added to any cluster is taken as its center}]&&\nonumber\\
        &\leq (1 + \epsilon) \cdot [w(p) + w(q)] + t_{\mathcal{B}} \cdot [(1 + \epsilon) \cdot [w(p) + w(q)] + |pq|]&&\nonumber\\
    &\leq t_{\mathcal{B}} \cdot (2 + \epsilon) \cdot [w(p) + w(q) + |pq|] \hspace{0.1in} \text{ when } t_{\mathcal{B}} \ge (1+\epsilon) &&\nonumber\\
        &= t_{\mathcal{B}} \cdot (2 + \epsilon) \cdot d_w(p,q).&\nonumber
\end{flalign}}

\noindent
Considering the analysis in all these cases proves that ${\cal G}$ is a $k$-VFTAWS with stretch $t$ upper bounded by $t_{\mathcal{B}} \cdot (2 + \epsilon)$.
We had chosen $t_{\mathcal{B}}$ to be equal to $(2 + \epsilon)$, so that it satisfies all the above cases.
Since $t_{\mathcal{B}}$ is $(2 + \epsilon)$, $t = (2 + \epsilon)^2 \le (4 + 5 \epsilon)$.
Hence, ${\cal G}$ is a $(k, 4 + 5\epsilon)$-VFTAWS for the metric space $(S, d_w)$.
\end{proof}

\ignore {
\begin{lemma}
\label{lem3}
The time taken to compute ${\cal G}$ is $O(n(n+k))$.
\end{lemma}

\begin{proof}
We need to account for the time required at each step of our algorithm. Sorting the points of $S$ in non-decreasing order of their weights takes $O(n \lg n)$ time. 
Creating the clusters for the set $S$ and storing the points belonging to a cluster with each cluster can be done in $O(n z)$ time. 
Here, $z$ is the number of clusters created. 
As stated in \cite{conf/stoc/Solomon14}, computing $\mathcal{B}$ takes $O(z(\lg z + k))$. 
Connecting each point $p \notin C$ to $min\{k+1, |C_l|\}$ least weighted points of the cluster $C_l$ to which it belongs takes $O(k (n - z))$ time. 
Since for each such $p$, we need to scan once, the first $k+1$ points of the cluster to which it belongs. 
Then connecting each point $p \notin C$ to $k$ nearest neighbors of $c_l$ in $\mathcal{B}$ can be done by doing a breadth-first traversal (BFS) of the graph $\mathcal{B}$. 
Time needed in this step is $O(kz)$, since $\mathcal{B}$ has $O(kz)$ edges. Since $z \leq n$, so a total of $O(kn)$ time is needed at this step.

The algorithm takes a total time of $O(n(n+k))$.
\end{proof}
}

\section{Vertex fault-tolerant additive weighted spanner for points in simple polygon} 
\label{sect:simppoly}

Given a set $S$ of $n$ points in a simple polygon $P$, for any two points $p, q \in S$, the shortest path between $p$ and $q$ in $P$ is denoted by $\pi(p, q)$, and the length of that path is indicated by $d_\pi(p, q)$.
For a $t \ge 1$, a geodesic $t$-spanner of $S$ is a graph ${\cal G}(S, E')$ such that $d_\pi(p, q) \le d_{\cal G}(p, q) \le t \cdot d_\pi(p, q)$ for every two points $p, q \in S$.
We detail an algorithm to compute a geodesic vertex fault-tolerant additive weighted spanner for the set $S$ of $n$ weighted points located in a simple polygon $P$. 

The following definition for the distance function $d_{\pi, w}$ for the set $S$ of points is considered in \cite{journals/algorithmica/AbamBFGS11}: For any $p,q \in S$, $d_{\pi,w}(p, q)$ equals to $0$ if $p = q$; otherwise, it is equal to $w(p) + d_\pi(p,q) + w(q)$.
Further, $(S, d_{\pi,w})$ was shown as a metric space in \cite{journals/algorithmica/AbamBFGS11}.
\ignore {
\[
d_{\pi,w}(p, q) =
\begin{cases}
    0 & \text{if } p = q. \\
    w(p) + d_\pi(p,q) + w(q) & \text{if } p \ne q.
\end{cases}
\]
}

We devise a divide-and-conquer based algorithm to compute a $(k, 4+\epsilon)$-VFTAWS for the metric space $(S, d_{\pi,w})$.
Following \cite{conf/compgeom/AbamAHA15}, we define few terms. 
Let $S'$ be a set of points contained in a simple polygon $P'$.
A vertical line segment that splits $P'$ into two simple sub-polygons of $P'$ such that each sub-polygon contains at most two-thirds of the points in $S'$ is termed a {\it splitting segment} with respect to $S'$ and $P'$.
(In the following description, $S'$ and $P'$ are not mentioned with the splitting segment whenever they are clear from the context.)
The {\it geodesic projection} $p_l$ of a point $p$ onto a splitting segment $l$ is a point on $l$ that has the minimum geodesic Euclidean distance from $p$ among all the points of $l$.
By extending \cite{conf/compgeom/AbamAHA15}, we give an algorithm to compute a $(k, 4+\epsilon)$-VFTAWS ${\cal G}$ for the metric space $(S, d_{\pi, w})$.

Our algorithm partitions $P$ containing points in $S$ into two simple sub-polygons $P'$ and $P''$ with a splitting segment $l$. 
For every point $p \in S$, we compute its geodesic projection $p_l$ onto $l$ and assign $w(p) + d_\pi(p, p_l)$ as the weight of $p_l$.
Let $S_l$ be the set comprising of all the geodesic projections of $S$ onto $l$.
Also, let $d_{l, w}$ be the additive weighted metric associated with points in $S_l$.
We use the algorithm from Section~\ref{sect:rd} to compute a $(k, 4+\epsilon)$-VFTAWS ${\cal G}_l$ for the metric space $(S_l, d_{l, w})$.
For every edge $(r, s)$ in ${\cal G}_l$, we add an edge between $p$ and $q$ to ${\cal G}$ with weight $d_\pi(p, q)$, wherein $r$ (resp. $s$) is the geodesic projection of $p$ (resp. $q$) onto $l$.
Let $S'$ (resp. $S''$) be the set of points contained in the sub-polygon $P'$ (resp. $P''$) of $P$.
We recursively process $P'$ (resp. $P''$) with points in $S'$ (resp. $S''$) unless $|S'|$ (resp. $|S''|$) is less than or equal to one.

\begin{algorithm}[H]     
    
    \SetKwInOut{Input}{Input}
    \SetKwInOut{Output}{Output}

    \caption{$k$-\large A\small DDITIVE\large P\small OLYGON\large C\small LUST\large FTS($S, P, k, \epsilon$)}
    \label{alg:addpolyclustfts}
    
    \Input{The simple polygon $P$, the set $S$ on $n$ additive weighted points, integer $k \geq 1$ and a real number $0 < \epsilon \leq 1$}
    
    \Output{$(k,(12+\epsilon))$-VFTS $\mathcal{G}$}

    \begin{algorithmic}[1]
    
    \WHILE{$card(P \cap S) \geq 1$}
        
        \STATE compute a splitting segment $l$ for $P$ using the algorithm given in \cite{conf/jcdcg/BoseCKKM98}
    
        \STATE initialize the set $S_l$ of the projections to $\phi$
    
        \FOR{every $p \in S$}
    
            \STATE find the projection $p_l$ of $p$ on $l$
            
            \STATE assign a weight $w(p) + d_{\pi}(p,p_l)$ to $p_l$
        
            \STATE $S_l := S_l \cup \{p_l\}$
        
        \ENDFOR
    
        \STATE compute a $(k,(4+\epsilon))$-vertex fault-tolerant spanner $\mathcal{G}_l$ for the set $S_l$ using Algorithm \ref{alg:addfts}
        
        \STATE for every edge $(r,s)$ in $\mathcal{G}_l$, add the edge $(p,q)$ to $\mathcal{G}$, where \\
	    $r$(resp. $s$) is the projection of $p$(resp. $q$) on $l$
    
        \STATE $k$-\large A\small DDITIVE\large P\small OLYGON\large C\small LUST\large FTS($S \cap P_l', P_l', k, \epsilon$)
        
        \small  //$P_l'$ denotes the sub-polygon to the left of $l$
    
        \STATE $k$-\large A\small DDITIVE\large P\small OLYGON\large FTS($S\cap P_l'', P_l'', k, \epsilon$)
        
        \small  //$P_l''$ denotes the sub-polygon to the right of $l$
        
    \ENDWHILE
    \end{algorithmic}

\end{algorithm}

We prove that the graph ${\cal G}$ is a $(k,(12+15\epsilon))$-VFTAWS for the metric space $(S, d_{\pi, w})$.
(Later, with further refinements to this graph, we improve the stretch factor to $(4+14\epsilon)$.)
We show that by removing any subset $S'$ with $|S'| \leq k$ from ${\cal G}$, for any two points $p$ and $q$ in $S \setminus S'$, there exists a path between $p$ and $q$ in ${\cal G} \setminus S'$ such that the $d_{\cal G}(p, q)$ is at most $(12+15\epsilon)d_{\pi, w}(p, q)$.
First, we note that there exists a splitting segment $l$ at some iteration of the algorithm so that $p$ and $q$ are on different sides of $l$. 
Let $r$ be a point belonging to $l \cap \pi(p, q)$.
Let $S'_l$ be the set comprising of geodesic projections of points in $S'$ on $l$.
Since ${\cal G}_l$ is a $(k,(4+5\epsilon))$-VFTAWS for the metric space $(S_l, d_{l, w})$, there exists a path $Q$ between $p_l$ and $q_l$ in ${\cal G}_l \setminus S'_l$ whose length is upper bounded by $(4+5\epsilon) \cdot d_{l,w}(p_l, q_l)$. 
Let $Q'$ be a path between $p$ and $q$ in ${\cal G} \setminus S'$ which is obtained by replacing each vertex $v_l$ of $Q$ by $v$ in $S$ such that the point $v_l$ is the geodesic projection of $v$ on $l$.
In the following, we show that the length of $Q'$, which is $d_{{\cal G} \setminus S'}(p, q)$, is upper bounded by $(12+15\epsilon) \cdot d_{\pi,w}(p, q)$.

\noindent
For every $x,y \in S$,
{\setlength{\abovedisplayskip}{0pt}
\begin{flalign}
\hspace{6mm}d_{\pi,w}(x,y) &= w(x) + d_{\pi}(x,y) + w(y)&&\nonumber\\
        &\leq w(x) + d_{\pi}(x,x_l) + d_{\pi}(x_l,y_l) + d_{\pi}(y_l,y) + w(y)&&\nonumber\\
        &\text{[by triangle inequality]}&&\nonumber\\
        &= w(x_l) + d_{\pi}(x_l,y_l) + w(y_l)&&\nonumber\\
        &\text{[since the weight associated with projection} \ z_l \ \text{of every point} \ z \ \text{is} \ w(z) + d_{\pi}(z,z_l)]&&\nonumber\\
        &=d_{l,w}(x_l,y_l).\label{eq18}&
\end{flalign}}

\noindent
This implies,
{\setlength{\abovedisplayskip}{0pt}
\begin{flalign}
\hspace{6mm}d_{{\cal G} \setminus S'}(p,q) &\le \sum_{x_l,y_l \in Q} d_{\pi,w}(x,y)&&\nonumber\\
        &\leq \sum_{x_l,y_l \in Q} d_{l,w}(x_l,y_l)&&\nonumber\\
        &\text{[\text{from} \ (\ref{eq18})]}&&\nonumber\\
        \hspace{22mm}&\leq (4+5\epsilon) \cdot d_{l,w}(p_l,q_l)\label{eq19}&&\\
        &\text{[since} \ {\cal G}_l \ \text{is a} \ (k,(4+5\epsilon))\text{-vertex fault-tolerant geodesic spanner]}&&\nonumber\\
        &= (4+5\epsilon) \cdot [w(p_l) + d_l(p_l,q_l) + w(q_l)]&&\nonumber\\
        &= (4+5\epsilon) \cdot [w(p_l) + d_{\pi}(p_l,q_l) + w(q_l)]&&\nonumber\\
        &\text{[since} \ P \ \text{contains $l$, shortest path between } p_l \text{ and } q_l&&\nonumber\\
        &\text{is same as the geodesic shortest path between } p_l \text{ and } q_l]&&\nonumber\\
        &= (4+5\epsilon) \cdot [w(p) + d_{\pi}(p,p_l) + d_{\pi}(p_l,q_l) + d_{\pi}(q_l,q) + w(q)]\label{eq20}&&\\
        &\text{[since the weight associated with projection} \ z_l \ \text{of every point} \ z \ \text{is} \ w(z) + d_{\pi}(z,z_l) \text{]}.&\nonumber
\end{flalign}}

\noindent
Since $r$ is a point belonging to both $l$ as well as to $\pi(p, q)$,
{\setlength{\abovedisplayskip}{0pt}
\begin{flalign}
\hspace{6mm}&d_{\pi}(p,p_l) \leq d_{\pi}(p,r) \ \text{and} \ d_{\pi}(q,q_l) \leq d_{\pi}(q,r).\label{eq21}&
\end{flalign}}

\noindent
Substituting (\ref{eq21}) into (\ref{eq20}),
{\setlength{\abovedisplayskip}{0pt}
\begin{flalign}
\hspace{6mm}d_{\mathcal{G} \setminus S'}(p,q) &\leq (4+5\epsilon) \cdot [w(p) + d_{\pi}(p,r) + d_{\pi}(p_l,q_l) + d_{\pi}(r,q) + w(q)]&&\nonumber\\
        &\leq (4+5\epsilon) \cdot [w(p) + d_{\pi}(p,r) + w(r) + d_{\pi}(p_l,q_l) + w(r) + d_{\pi}(r,q) + w(q)]&&\nonumber\\
        &\text{[since the weight associated with every point is non-negative]}&&\nonumber\\
        &= (4+5\epsilon) \cdot [d_{\pi,w}(p,r) + d_{\pi}(p_l,q_l) + d_{\pi,w}(r,q)]&&\nonumber\\
        &= (4+5\epsilon) \cdot [d_{\pi,w}(p,q) + d_{\pi}(p_l,q_l)]\label{eq22}&&\\
        &\text{[since} \ \pi(p,q) \ \text{intersects} \ l \text{ at} \ r\text{, by optimal substructure property of shortest}&&\nonumber\\ 
        &\text{paths,} \ \pi(p,q) = \pi(p,r) + \pi(r,q)\text{]}.&\nonumber
\end{flalign}}

\noindent
Consider
{\setlength{\abovedisplayskip}{0pt}
\begin{flalign}
\hspace{6mm}d_{\pi}(p_l,q_l) &\leq d_{\pi}(p_l,p) + d_{\pi}(p,q) + d_{\pi}(q,q_l)&&\nonumber\\
        &\text{[since} \ \pi \ \text{follows triangle inequality]}&&\nonumber\\
        &\leq d_{\pi}(r,p) + d_{\pi}(p,q) + d_{\pi}(q,r)&&\nonumber\\
        &\text{[using} \ (\ref{eq21})\text{]}&&\nonumber\\
        &\leq w(r) + d_{\pi}(r,p) + w(p) + w(p) + d_{\pi}(p,q) + w(q) + w(q) + d_{\pi}(q,r) + w(r)&&\nonumber\\
        &\text{[since weight associated with every point is non-negative]}&&\nonumber\\
        &= d_{\pi,w}(p,r) + d_{\pi,w}(p,q) + d_{\pi,w}(r,q)&&\nonumber\\
        &= d_{\pi,w}(p,q) + d_{\pi,w}(p,q)&&\nonumber\\
        &\text{[since} \ \pi(p,q) \ \text{intersects} \ l \text{ at } \ r\text{, by optimal substructure property of shortest}&&\nonumber\\ 
        &\text{paths,} \ \pi(p,q) = \pi(p,r) + \pi(r,q)\text{]}&&\nonumber\\
        &= 2d_{\pi,w}(p,q).\label{eq23}&
\end{flalign}}
Substituting (\ref{eq23}) into (\ref{eq22}),
{\setlength{\abovedisplayskip}{0pt}
\begin{flalign}
\hspace{6mm}&d_{G \setminus S'}(p,q) \leq 3(4+5\epsilon) \cdot d_{\pi,w}(p,q).&\nonumber
\end{flalign}}


Hence, the graph $\mathcal{G}$ computed as described above is a $(k, 12+\epsilon)$-VFTAWS for the metric space $(S, d_{\pi, w})$.
We further improve the stretch factor of $\mathcal{G}$ by applying the refinement given in \cite{conf/soda/AbamBS17} to the above-described algorithm.
In doing this, for each point $p \in S$, we compute the geodesic projection $p_{\gamma}$ of $p$ on the splitting line $\gamma$ and we construct a set $S(p,\gamma)$ as defined herewith. 
Let $\gamma(p) \subseteq \gamma$ be $\{ x \in \gamma : d_{\gamma,w}(p_{\gamma},x) \leq (1 + 2\epsilon) \cdot d_{\pi}(p,p_{\gamma}) \}$. 
Here, for any $p,q \in S$, $d_{\gamma ,w}(p, q)$ is equal to $0$ if $p = q$; otherwise, equals to $w(p) + d_\gamma(p,q) + w(q)$.
We divide $\gamma(p)$ into $c$ pieces with $c \in O(1/\epsilon^{2})$: each piece is denoted by $\gamma_{j}(p)$ for $1 \leq j \leq c$, and the piece length is at most $\epsilon \cdot d_{\pi}(p,p_{\gamma})$. 
For every piece $j$, we compute the point $p_\gamma^{(j)}$ nearest to $p$ in $\gamma_j(p)$.
The set $S(p,\gamma)$ is defined as $\{ p_\gamma^{(j)} : p_\gamma^{(j)} \in \gamma_{j}(p)$  and $1 \leq j \leq c\}$. 
For every $r \in S(p,\gamma)$, the non-negative weight $w(r)$ of $r$ is set to $w(p) + d_{\pi}(p,r)$.
Let $S_\gamma$ be $\cup_{p \in S}S(p,\gamma)$.

We replace the set $S_l$ in computing ${\cal G}$ with the set $S_\gamma$ and compute a $(k,(4+5\epsilon))$-VFTAWS $\mathcal{G}_l$ using the algorithm from Section~\ref{sect:rd} for the set $S_\gamma$ instead. 
Further, for every edge $(r,s)$ in $\mathcal{G}_l$, we add the edge $(p,q)$ to $\mathcal{G}$ with weight $d_\pi(p, q)$ whenever $r \in S(p,l)$ and $s \in S(q,l)$.
The rest of the algorithm remains the same.

In the following, we restate a lemma from \cite{conf/soda/AbamBS17}, which is useful for our analysis.

\begin{lemma}
\label{lemfromabam}
Let $P$ be a simple polygon.
Consider two points $x,y \in P$.
Let $r$ be the point at which shortest path $\pi(x, y)$ between $x$ and $y$ intersects a splitting segment $\gamma$.
If $r \notin \gamma(x)$, point $x_\gamma'$ (resp. $y_\gamma'$) is set as $x_\gamma$ (resp. $y_\gamma$). 
Otherwise $x_\gamma'$ (resp. $y_\gamma'$) is set as the point from $S(x,\gamma)$ (resp. $S(y,\gamma)$) which is nearest to $x$ (resp. $y$). 
Then $d_{\pi}(x,x_\gamma') + d_\gamma(x_\gamma',r)$ (resp. $d_\gamma(r,y_\gamma') + d_{\pi}(y_\gamma',y)$) is less than or equal to $(1 + \epsilon) \cdot d_{\pi}(x,r)$ (resp. $(1 + \epsilon) \cdot d_{\pi}(r,y)$).
\end{lemma}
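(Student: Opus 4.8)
The plan is to prove the stated bound for $x$, $x_\gamma'$ and $r$; the bound for $y$, $y_\gamma'$ and $r$ is obtained by the verbatim argument with $y$ in place of $x$. Two elementary facts are used throughout. First, $\gamma$ is a straight segment lying inside $P$, so the geodesic distance between any two of its points equals their Euclidean distance. Second, $x_\gamma$ realizes $\min_{z\in\gamma}d_\pi(x,z)$ and the shortest path $\pi(x,x_\gamma)$ meets $\gamma$ orthogonally in the unfolded funnel from $x$ to $\gamma$ (otherwise the path from $x$ to $\gamma$ could be shortened); in particular $d_\pi(x,x_\gamma)\le d_\pi(x,r)$. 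I then split on whether $r\in\gamma(x)$.

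If $r\in\gamma(x)$, then $r$ lies in a unique piece $\gamma_m(x)$ of $\gamma(x)$, and I take $x_\gamma'$ to be the element of $S(x,\gamma)$ lying in $\gamma_m(x)$, that is, the point of $\gamma_m(x)$ geodesically nearest to $x$. Since $r\in\gamma_m(x)$, minimality gives $d_\pi(x,x_\gamma')\le d_\pi(x,r)$. Since $x_\gamma'$ and $r$ both lie in $\gamma_m(x)$, whose Euclidean length is at most $\epsilon\cdot d_\pi(x,x_\gamma)\le\epsilon\cdot d_\pi(x,r)$, we also get $d_\gamma(x_\gamma',r)\le\epsilon\cdot d_\pi(x,r)$. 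Adding the two inequalities yields $d_\pi(x,x_\gamma')+d_\gamma(x_\gamma',r)\le(1+\epsilon)\,d_\pi(x,r)$.

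If $r\notin\gamma(x)$, then $x_\gamma'=x_\gamma$; write $a=d_\pi(x,x_\gamma)$ and $b=d_\gamma(x_\gamma,r)$. The threshold in the definition of $\gamma(x)$ is chosen so that $r\notin\gamma(x)$ forces $r$ to lie far from $x_\gamma$ along $\gamma$ --- concretely $b\ge a/\epsilon$. The key estimate is $d_\pi(x,r)\ge\sqrt{a^2+b^2}$: since $\pi(x,x_\gamma)$ and $\pi(x,r)$ share a prefix ending at a funnel apex $v$ with $vx_\gamma\perp\gamma$, we have $d_\pi(x,r)\ge d_\pi(x,v)+|vr|=d_\pi(x,v)+\sqrt{|vx_\gamma|^2+b^2}$, and expanding the square together with $a=d_\pi(x,v)+|vx_\gamma|$ gives $d_\pi(x,r)^2\ge a^2+b^2$. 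It then remains to check $a+b\le(1+\epsilon)\sqrt{a^2+b^2}$, that is, $2ab\le(2\epsilon+\epsilon^2)(a^2+b^2)$, which is immediate from $a\le\epsilon b$ since $2ab\le 2\epsilon b^2\le 2\epsilon(a^2+b^2)$. Hence $d_\pi(x,x_\gamma)+d_\gamma(x_\gamma,r)\le(1+\epsilon)\,d_\pi(x,r)$, as required.

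The main obstacle is the second case: one has to pin down the threshold in the definition of $\gamma(x)$ so that ``$r\notin\gamma(x)$'' places the ratio $b/a$ in the regime where $a+b\le(1+\epsilon)\sqrt{a^2+b^2}$ (this inequality genuinely fails for intermediate ratios of $a$ and $b$), and one has to justify $d_\pi(x,r)\ge\sqrt{d_\pi(x,x_\gamma)^2+d_\gamma(x_\gamma,r)^2}$ from the funnel/unfolding structure of geodesics in a simple polygon. The first case is routine once the pieces have length at most $\epsilon\cdot d_\pi(x,x_\gamma)$ and their representatives are taken as geodesically nearest points. Since this lemma is quoted from \cite{conf/soda/AbamBS17}, one could also simply invoke it; the sketch above records how its proof runs.
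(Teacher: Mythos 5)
The paper itself does not prove this lemma --- it is restated verbatim (with some garbling) from \cite{conf/soda/AbamBS17} and used as a black box --- so there is no internal proof to compare against; your reconstruction is the right thing to attempt, and your Case 1 is correct once one adopts, as you do, the only reading of the statement that can be true: $x_\gamma'$ must be the representative of the \emph{piece of $\gamma(x)$ containing $r$} (not literally ``the point of $S(x,\gamma)$ nearest to $x$,'' which would essentially be $x_\gamma$ and would make the claim false for moderate $b$). With that reading, $d_\pi(x,x_\gamma')\le d_\pi(x,r)$ by minimality over the piece and $d_\gamma(x_\gamma',r)\le \epsilon\, d_\pi(x,x_\gamma)\le \epsilon\, d_\pi(x,r)$ by the piece length, exactly as you say.

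Case 2 has two genuine soft spots. First, the threshold: the paper defines $\gamma(x)$ with the factor $(1+2\epsilon)$, under which $r\notin\gamma(x)$ gives only $b\gtrsim(1+2\epsilon)a$, nowhere near $b\ge a/\epsilon$; your bound only makes sense if the intended factor is $(1+2/\epsilon)$ (which the piece-count arithmetic $|\gamma(x)|/(\epsilon\, d_\pi(x,x_\gamma))=O(1/\epsilon^2)$ strongly suggests is a typo in the paper), so you are silently repairing the definition rather than deriving the bound from it --- this should be said explicitly. Second, your justification of $d_\pi(x,r)\ge\sqrt{a^2+b^2}$ assumes the geodesic from the funnel apex $v$ to $x_\gamma$ is a single straight segment perpendicular to $\gamma$; in general $\pi(x,x_\gamma)$ bends at further reflex vertices after $v$, the decomposition $a=d_\pi(x,v)+|vx_\gamma|$ fails, and $|vr|\ne\sqrt{|vx_\gamma|^2+b^2}$. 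The inequality is in fact true in general, but proving it requires tracking the last bend vertex $u_z$ of $\pi(x,z)$ as $z$ moves along $\gamma$ away from $x_\gamma$ and arguing from tautness that $u_z$ always lies on the far side of $x_\gamma$ from $z$ (so that $\frac{d}{db}\bigl[d_\pi(x,x_\gamma+b)^2-b^2\bigr]\ge 0$). Alternatively, with the corrected threshold $b>(1+2/\epsilon)a$ none of this is needed: the triangle inequality gives $d_\pi(x,r)\ge d_\pi(x_\gamma,r)-d_\pi(x,x_\gamma)=b-a>\tfrac{2}{\epsilon}a$, hence $a+b\le 2a+d_\pi(x,r)<(1+\epsilon)\,d_\pi(x,r)$, which is both shorter and avoids the funnel geometry entirely.
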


\begin{theorem}
\label{thm:simppoly}
Let $S$ be a set of $n$ weighted points in simple polygon $P$ with non-negative weights associated to points with weight function $w$.
For any fixed constant $\epsilon > 0$, there exists a $(k, (4+\epsilon))$-vertex fault-tolerant additive weighted geodesic spanner with $O(\frac{kn}{\epsilon^2} \lg{n})$ edges for the metric space $(S, d_{\pi, w})$.
\end{theorem}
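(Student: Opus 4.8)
The plan is to combine the divide-and-conquer scheme already analyzed above with the refinement using the sets $S(p,\gamma)$ and Lemma~\ref{lemfromabam}. First I would establish the edge count. At each level of the recursion the splitting segment $\gamma$ partitions the current point set into parts of size at most $\frac23$ of the current size, so the recursion has depth $O(\lg n)$ and the total number of points across one level is $O(n)$. For each point $p$ we create $|S(p,\gamma)| = c = O(1/\epsilon^2)$ projection points, so $|S_\gamma| = O(n/\epsilon^2)$ at each level; invoking Algorithm~\ref{alg:addfts} on $S_\gamma$ yields, by Theorem~\ref{thm:rd}, a graph $\mathcal G_l$ with $O(k\,|S_\gamma|) = O(kn/\epsilon^2)$ edges, and every such edge contributes one edge to $\mathcal G$. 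Summing over the $O(\lg n)$ levels gives $O(\frac{kn}{\epsilon^2}\lg n)$ edges in total.

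Next I would prove fault-tolerance and the stretch bound. Fix $S' \subset S$ with $|S'| \le k$ and two points $p, q \in S\setminus S'$. As before, there is a level at which the splitting segment $\gamma$ separates $p$ and $q$; let $r = \gamma \cap \pi(p,q)$. Define $p_\gamma', q_\gamma'$ as in Lemma~\ref{lemfromabam}: if $r \notin \gamma(p)$ then $p_\gamma' := p_\gamma$, else $p_\gamma'$ is the nearest point to $p$ in $S(p,\gamma)$, and symmetrically for $q$. The key point is that in either case $p_\gamma' \in S_\gamma$ and $q_\gamma' \in S_\gamma$, so $d_{\gamma,w}$-distances between them are handled by the $(k,(4+5\epsilon))$-VFTAWS $\mathcal G_l$ on $S_\gamma$. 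Removing $S'$ from $\mathcal G$ corresponds to removing the set $S'_\gamma := \bigcup_{v\in S'} S(v,\gamma)$ from $\mathcal G_l$; since $|S'_\gamma| \le c|S'| $ this is too large to invoke fault-tolerance directly, so I must instead argue — exactly as in the unrefined proof — that the spanner path $Q$ in $\mathcal G_l$ between $p_\gamma'$ and $q_\gamma'$ guaranteed by the $(k,\cdot)$-fault-tolerance avoiding the image of $S'$ exists because at most $k$ \emph{original} vertices are deleted, and the way $\mathcal G$ is built from $\mathcal G_l$ (each $\mathcal G_l$-vertex in $S(v,\gamma)$ is deleted exactly when $v\in S'$) means the relevant fault set in $\mathcal G_l$, when pulled back to $S$, still has size $\le k$. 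Translating $Q$ back to a path $Q'$ in $\mathcal G\setminus S'$ by the substitution $x_\gamma' \mapsto x$, and using (\ref{eq18})-type triangle inequalities to bound each edge $d_\pi(x,y) \le d_{\gamma,w}(x_\gamma',y_\gamma')$, gives
\[
d_{\mathcal G\setminus S'}(p,q) \;\le\; (4+5\epsilon)\, d_{\gamma,w}(p_\gamma',q_\gamma').
\]

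Then I would bound $d_{\gamma,w}(p_\gamma',q_\gamma')$. Writing it out as $w(p_\gamma') + d_\gamma(p_\gamma', q_\gamma') + w(q_\gamma')$, and recalling $w(p_\gamma') = w(p) + d_\pi(p,p_\gamma')$, and splitting $d_\gamma(p_\gamma',q_\gamma') \le d_\gamma(p_\gamma',r) + d_\gamma(r,q_\gamma')$, I get
\[
d_{\gamma,w}(p_\gamma',q_\gamma') \le w(p) + \bigl(d_\pi(p,p_\gamma') + d_\gamma(p_\gamma',r)\bigr) + \bigl(d_\gamma(r,q_\gamma') + d_\pi(q_\gamma',q)\bigr) + w(q).
\]
Now Lemma~\ref{lemfromabam} bounds the two parenthesized terms by $(1+\epsilon)d_\pi(p,r)$ and $(1+\epsilon)d_\pi(r,q)$ respectively; since $r$ lies on $\pi(p,q)$, optimal substructure gives $d_\pi(p,r)+d_\pi(r,q) = d_\pi(p,q)$, so the whole bound is $\le (1+\epsilon)\,d_\pi(p,q) + w(p) + w(q) \le (1+\epsilon)\, d_{\pi,w}(p,q)$. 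Combining, $d_{\mathcal G\setminus S'}(p,q) \le (4+5\epsilon)(1+\epsilon)\, d_{\pi,w}(p,q) = (4 + 9\epsilon + 5\epsilon^2)\, d_{\pi,w}(p,q) \le (4+14\epsilon)\,d_{\pi,w}(p,q)$ for $\epsilon \le 1$. Renaming $\epsilon$ (replacing $14\epsilon$ by $\epsilon$) gives the claimed $(k,(4+\epsilon))$ bound.

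The main obstacle I anticipate is the fault-tolerance bookkeeping in the middle step: one must verify carefully that deleting $\le k$ points of $S$ never forces the recursive spanner $\mathcal G_l$ to lose more than the fault-tolerance it was built for, i.e. that the map $v \mapsto S(v,\gamma)$ interacts correctly with the $(k,\cdot)$-VFTAWS guarantee of Theorem~\ref{thm:rd} — equivalently, one re-runs a case analysis like Cases~1–8 of Theorem~\ref{thm:rd} at the level of the projected points, checking that in every case at least one intermediate projected vertex survives because its preimage in $S$ is not in $S'$. The geometric estimates (triangle inequalities, Lemma~\ref{lemfromabam}) are routine once this is set up.
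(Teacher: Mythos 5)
Your proposal is correct and follows essentially the same route as the paper: the identical $S(n)=S(n_1)+S(n_2)+O(kn/\epsilon^2)$ recurrence for the edge count, and the same stretch argument via the separating segment, the points $p_\gamma',q_\gamma'$, Lemma~\ref{lemfromabam}, and the product bound $(4+5\epsilon)(1+\epsilon)\le 4+14\epsilon$. The bookkeeping issue you flag (that $\bigcup_{v\in S'}S(v,\gamma)$ may have more than $k$ elements, so the $(k,\cdot)$-fault-tolerance of $\mathcal G_l$ cannot be invoked verbatim) is real and is glossed over in the paper as well; your proposed resolution --- rerunning the Cases~1--8 analysis on the projected points while counting faults by preimages in $S$ --- is the right way to close it.
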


\begin{proof}
In constructing a $(k,(4+\epsilon))$-VFTAWS ${\mathcal{G}}_l$ for the set $S_\gamma$ of $\frac{n}{\epsilon^{2}}$ points, we add $O(\frac{k n}{\epsilon^{2}})$ edges to $\mathcal{G}$ in one iteration.
Let $S(n)$ be the size of $\mathcal{G}$ when there are $n$ points. 
Then $S(n) = S(n_1) + S(n_2) + \frac{k n}{\epsilon^{2}}$ where $n_1, n_2$ are the number of points in each of the partitions formed by the splitting segment.
Since $n_1, n_2 \geq n/3$, $S(n)$ is $O(\frac{k n}{\epsilon^{2}} \lg n)$.

For proving that ${\cal G}$ is a $(k, (4+\epsilon))$-VFTAWS for the metric space $(S, d_{\pi, w})$, we show that for any set $S' \subset S$ with $|S'| \leq k$ and for any two points $p, q \in S \setminus S'$ there exists a $(4+\epsilon)$-spanner path between $p$ and $q$ in ${\cal G} \setminus S'$. 
First, we note that there exists a splitting segment $l$ at some iteration of the algorithm so that $p$ and $q$ are on different sides of $l$.
Let $r$ be a point belonging to $l \cap \pi(p, q)$.
Let $S'_l$ be the set comprising of geodesic projections of points in $S'$ on $l$.
Since ${\cal G}_l$ is a $(k,(4+5\epsilon))$-VFTAWS for the metric space $(S_l, d_{l, w})$, there exists a path $Q$ between $p_l$ and $q_l$ in ${\cal G}_l \setminus S'_l$ whose length is upper bounded by $(4+5\epsilon) \cdot d_{l,w}(p_l, q_l)$. 
Let $Q'$ be a path between $p$ and $q$ in ${\cal G} \setminus S'$ which is obtained by replacing each vertex $v_l$ of $Q$ by $v$ in $S$ such that the point $v_l$ is the geodesic projection of $v$ on $l$.
In the following, we show that the length of $Q'$, which is $d_{{\cal G} \setminus S'}(p, q)$, is upper bounded by $(4+14\epsilon) \cdot d_{\pi,w}(p, q)$.

\ignore {
For every $x,y \in S$,
{\setlength{\abovedisplayskip}{0pt}
\begin{flalign}
\hspace{6mm}d_{\pi,w}(x,y) &= w(x) + d_{\pi}(x,y) + w(y)&&\nonumber\\
        &\leq w(x) + d_{\pi}(x,x_l) + d_{\pi}(x_l,y_l) + d_{\pi}(y_l,y) + w(y)&&\nonumber\\
        &\text{[by triangle inequality]}&&\nonumber\\
        &= w(x_l) + d_{\pi}(x_l,y_l) + w(y_l)&&\nonumber\\
        &\text{[since the weight associated with projection} \ z_l \ \text{of every point} \ z \ \text{is } \ w(z) + d_{\pi}(z,z_l)]&&\nonumber\\
        &=d_{l,w}(x_l,y_l)\label{eq18}&
\end{flalign}}

\noindent
This implies, 
{\setlength{\abovedisplayskip}{0pt}
\begin{flalign}
\hspace{6mm}d_{{\cal G} \setminus S'}(p,q) &\le \sum_{x_l,y_l \in Q} d_{\pi,w}(x,y)&&\nonumber\\
        &\leq \sum_{x_l,y_l \in Q} d_{l,w}(x_l,y_l)&&\nonumber\\
        &\text{[\text{from} \ (\ref{eq18})]}&&\nonumber\\
        \hspace{22mm}&\leq (2+6\epsilon) \cdot d_{l,w}(p_l,q_l)\label{eq19}&&\\
        &\text{[since} \ {\cal G}_l \ \text{is a} \ (k,(2 + 6\epsilon))\text{-vertex fault-tolerant geodesic spanner]}&&\nonumber
\end{flalign}}
}

\noindent
Following Lemma~\ref{lemfromabam}, if $r \notin l(p)$, point $p_l'$ (resp. $q_l'$) is set as $p_l$ (resp. $q_l$). 
Otherwise $p_l'$ (resp. $q_l'$) is set as the point from $S(p,l)$ (resp. $S(q,l)$) which is nearest to $p$ (resp. $q$).

{\setlength{\abovedisplayskip}{0pt}
\begin{flalign}
\hspace{6mm}d_{l,w}(p_l',q_l') &= w(p_l') + d_l(p_l',q_l') + w(q_l')&&\nonumber\\
        &\leq w(p_l') + d_l(p_l',r) + d_l(r,q_l') + w(q_l')&&\nonumber\\
        &\text{[by triangle inequality]}&\nonumber\\
        &\leq w(p_l') + d_l(p_l',r) + w(r) + w(r) +  d_l(r,q_l') + w(q_l')&&\nonumber\\
        &\text{[since the weight associated with each point is non-negative]}&&\nonumber\\
        &= w(p) + d_{\pi}(p,p_l') + d_l(p_l',r) + w(r)&&\nonumber \\
    &\hspace{0.3in} + w(r) + d_l(r,q_l') + d_{\pi}(q_l',q) + w(q)\label{eq24}&&\\
        &\text{[due to weight assigned to geodesic projections]}.&\nonumber
\end{flalign}}

\noindent
Applying Lemma~\ref{lemfromabam} with $p_l'$ and $q_l'$,
{\setlength{\abovedisplayskip}{0pt}
\begin{flalign}
\hspace{6mm}&d_{\pi}(p,p_l') + d_l(p_l',r) \leq (1 + \epsilon) \cdot d_{\pi}(p,r)\text{, and} \label{eq25'}\\
\hspace{6mm}&d_l(r,q_l') + d_{\pi}(q_l',q) \leq (1 + \epsilon) \cdot d_{\pi}(r,y).\label{eq25}&
\end{flalign}}

\noindent
Substituting (\ref{eq25'}) and (\ref{eq25}) in (\ref{eq24}),
{\setlength{\abovedisplayskip}{0pt}
\begin{flalign}
\hspace{6mm}d_{l,w}(p_l',q_l') &\leq w(p) + (1 + \epsilon) \cdot d_\pi(p,r) + w(r) + w(r) + (1 + \epsilon) \cdot d_\pi(r,q) + w(q)&&\nonumber\\
        &\leq (1 + \epsilon) \cdot [d_{\pi,w}(p,r) + d_{\pi,w}(r,q)]&&\nonumber\\
        &= (1 + \epsilon) \cdot d_{\pi,w}(p,q)\label{eq26}&&\\
        &[\text{since } r \in l \cap \pi(p,q), \text{ by the optimal substructure property of shortest}&&\nonumber\\ 
        &\text{paths,} \ \pi(p,q) = \pi(p,r) + \pi(r,q)].&\nonumber
\end{flalign}}

\noindent
Replacing $p_l$ (resp. $q_l$) by ${p_l}'$ (resp. ${q_l}'$) in inequality (\ref{eq19}),
{\setlength{\abovedisplayskip}{0pt}
\begin{flalign}
\hspace{6mm}d_{\mathcal{G} \setminus S'}(p,q) &\leq (4+5\epsilon) \cdot d_{l,w}({p_l}',{q_l}')&&\nonumber\\
        &\leq (4+5\epsilon)(1 + \epsilon) \cdot d_{\pi,w}(p,q)&&\nonumber\\
        &\text{[from} \ (\ref{eq26})\text{]}&&\nonumber\\
        &\leq (4+14\epsilon) \cdot d_{\pi,w}(p,q).&\nonumber
\end{flalign}}

\noindent
Thus, $\mathcal{G}$ is a $(k,(4+\epsilon))$-vertex fault-tolerant additive weighted geodesic spanner for the set $S$ of points located in the simple polygon $P$. 
\end{proof}

\section{Vertex fault-tolerant additive weighted spanner for points in a polygonal domain}
\label{sect:polydom}

We devise an algorithm to compute a geodesic $(k,(4+\epsilon))$-vertex fault-tolerant spanner for a set $S$ of $n$ points lying in the free space $\cal{D}$ of the given polygonal domain $\cal{P}$ while each input point is associated with a non-negative weight.
The polygonal domain $\cal{P}$ consists of a simple polygon and $h$ simple polygonal holes located interior to to that polygon. 
Our algorithm depends on the algorithm given in \cite{conf/compgeom/AbamAHA15} to compute a $(5+\epsilon)$-spanner for a set of unweighted points lying in $\mathcal{D}$. 
We decompose the free space $\mathcal{D}$ into simple polygons using $O(h)$ splitting segments such that no splitting segment crosses any of the holes of $\cal{D}$ and each of the resultant simple polygons has at most three splitting segments bounding it.
As part of this decomposition, two vertical line segments are drawn, one upwards and the other downwards, respectively from the leftmost and rightmost extreme (along the $x$-axis) vertices of each hole. 
If any of the resulting simple polygons has more than three splitting segments on its boundary, then that simple polygon is further decomposed. 
To achieve efficiency, a splitting segment is chosen such that it has around half of its bounding splitting segments on either of its sides.
This algorithm results in partitioning $\mathcal{D}$ into $O(h)$ simple polygons.
Further, a graph ${\mathcal{G}}_{d}$ is constructed where each vertex of ${\mathcal{G}}_{d}$ corresponds to a simple polygon of this decomposition. 
Each vertex $v$ of ${\mathcal{G}}_{d}$ is associated with a weight equal to the number of points that lie inside the simple polygon corresponding to $v$. 
Two vertices are connected by an edge in $\mathcal{G}_d$ whenever their corresponding simple polygons are adjacent to each other in the decomposition. 
We note that $\mathcal{G}_d$ is a planar graph.
Hence, we use the following theorem from \cite{journals/siamdm/AlonST94} to compute a $O(\sqrt{h})$-separator $R$ of $\mathcal{G}_d$.

\begin{theorem}
\label{thm4}
Suppose $G=(V,E)$ is a planar vertex-weighted graph with $|V| = m$. 
Then, an $O(\sqrt{m})$-separator for $G$ can be computed in $O(m)$ time. That is, $V$ can be partitioned into sets $P, Q$ and $R$ such that $|R| = O(\sqrt{m})$, there is no edge between $P$ and $Q$, and $w(P),w(Q) \leq \frac{2}{3}w(V)$.
Here, $w(X)$ is the sum of weights of all vertices in $X$.
\end{theorem}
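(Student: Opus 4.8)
The plan is to obtain the weighted separator by the classical layered breadth-first argument of Lipton and Tarjan, adapted so that "balanced" is measured by the vertex weights $w(\cdot)$ while the size bound on $R$ stays in terms of $m=|V|$. First I would reduce to a convenient graph: if $G$ is disconnected, apply the result to the component of largest weight and distribute the remaining components between $P$ and $Q$ greedily (no single one has weight exceeding $\tfrac23 w(V)$, so this never overshoots); and since adding edges only makes separation harder, we may assume $G$ is a triangulated plane graph. Fix a root $v_0$, let $L_0,L_1,\dots,L_\ell$ be the BFS levels with spanning tree $T$, and let $W_i=w(L_0\cup\cdots\cup L_i)$. Choose the "middle" index $m_0$ with $W_{m_0-1}<\tfrac12 w(V)\le W_{m_0}$, so that the vertices strictly above and strictly below level $m_0$ each carry weight less than $\tfrac12 w(V)$.

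Next I would locate two \emph{thin} levels $m_1\le m_0\le m_2$ with $|L_{m_1}|,|L_{m_2}|=O(\sqrt m)$ and $m_2-m_1=O(\sqrt m)$; this is the usual pigeonhole/counting argument on the level sizes and does not involve the weights at all. Contract $L_0\cup\cdots\cup L_{m_1}$ to a single super-vertex of weight $W_{m_1}$ and delete $L_{m_2+1}\cup\cdots\cup L_\ell$; the resulting triangulated plane graph $H$ carries a rooted spanning tree of depth $d=O(m_2-m_1)=O(\sqrt m)$. Now apply the fundamental-cycle lemma: in a triangulated plane graph with a rooted spanning tree of depth $d$ there is a non-tree edge whose fundamental cycle has length at most $2d+1$ and whose interior and exterior each carry at most $\tfrac23$ of the total weight. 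The standard proof of this lemma — walk around the dual tree while maintaining the weight enclosed so far and stop at the first crossing of the $\tfrac13$–$\tfrac23$ threshold — goes through verbatim with arbitrary non-negative vertex weights. Lift the cycle back to $G$, obtaining a vertex set $R'$ of size $O(\sqrt m)$.

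Then set $R=R'\cup L_{m_1}\cup L_{m_2}$, which has size $O(\sqrt m)$. Removing $R$ leaves four kinds of pieces: the "top" $L_0\cup\cdots\cup L_{m_1-1}$, the "bottom" $L_{m_2+1}\cup\cdots\cup L_\ell$, and the inside and outside of the lifted cycle within the middle band; I would assign these four pieces greedily to $P$ and $Q$. This never violates the $\tfrac23 w(V)$ bound, because the top and bottom each have weight below $\tfrac12 w(V)$ by the choice of $m_0$, while the inside and outside each have weight at most $\tfrac23$ of the middle-band weight, hence at most $\tfrac23 w(V)$. For the running time, BFS and $T$, the prefix weights $W_i$, the thin levels, the contraction, the depth-$d$ tree of $H$, the balanced fundamental cycle, and the triangulation are all computable in $O(m)$ time, so the whole procedure is $O(m)$.

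The main obstacle is the interface between the level selection and the weight balance: one must pick $m_1,m_2$ so that \emph{simultaneously} the two boundary levels are $O(\sqrt m)$ in cardinality, the gap $m_2-m_1$ is $O(\sqrt m)$ (so $H$ has small diameter), and the four resulting pieces are two-colorable under the weight constraint. The last condition is the only place weights enter nontrivially, and it hinges on the observation above that the middle band carries at most $w(V)$ while the fundamental cycle already cuts that band's weight in a $\tfrac23$–$\tfrac13$ ratio, so a greedy four-to-two assignment always fits. Pushing the constant in $|R|\le c\sqrt m$ down to the sharp value of \cite{journals/siamdm/AlonST94} would require their refined planar-embedding argument in place of the crude Lipton–Tarjan estimates, but for the spanner construction here any $c=O(1)$ suffices.
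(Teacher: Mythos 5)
The paper does not prove this statement at all: Theorem~\ref{thm4} is imported verbatim from Alon, Seymour and Thomas \cite{journals/siamdm/AlonST94} and used as a black box, so there is no in-paper argument to compare against. Your proposal supplies an actual proof, and it is essentially correct: it is the classical Lipton--Tarjan construction (BFS layering, a median level by cumulative weight, two thin levels by counting, contraction of the top, and the balanced fundamental-cycle lemma in the shrunken triangulated graph), which already handles arbitrary non-negative vertex costs and runs in linear time. This yields $|R|\le c\sqrt{m}$ with the Lipton--Tarjan constant rather than the sharper constant of \cite{journals/siamdm/AlonST94}, whose proof proceeds differently; as you note, any constant suffices for the spanner bounds here, so your route is perfectly adequate for the paper's purposes. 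Two small points deserve one more line each if you write this out in full. First, the ``greedy never overshoots'' claims (both for distributing connected components and for two-coloring the four final pieces) are not true for an arbitrary greedy order; you need the standard case split --- if some piece has weight at least $\tfrac13 w(V)$, isolate the largest piece on one side (the rest then sums to at most $\tfrac23 w(V)$), and otherwise largest-first into the lighter bin works because no single piece can push a bin past $\tfrac23 w(V)$ without forcing the total above $w(V)$. Second, the fundamental-cycle lemma bounds the inside and outside by $\tfrac23$ of the \emph{shrunken} graph's total weight (which includes the super-vertex), not of the middle band's weight; since that total is at most $w(V)$ the conclusion you draw is still the right one, but the intermediate phrasing should be corrected. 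Neither point is a gap in substance.
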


We compute a $O(\sqrt{h})$-separator $R$ for the graph ${\cal G}_d$ using Theorem~\ref{thm4}.
Let $P, Q,$ and $R$ be the sets into which the vertices of ${\cal G}_d$ is partitioned.
For each vertex $r \in R$, we collect the bounding splitting segments of the simple polygon corresponding to $r$ into $H$ i.e., $O(\sqrt{h})$ splitting segments are collected into a set $H$. 
For each splitting segment $l$ in $H$, we proceed as follows. 
For each point $p$ that lies in the given simple polygon, we find the projection $p_l$ of $p$ on $l$; we assign the weight $w(p) + d_\pi(p, p_l)$ to point $p_l$ and include $p_l$ into the set $S_l$ corresponding to points projected on to line $l$.
We compute the $(k, 4+\epsilon)$-VFTAWS ${\cal G}_l$ for the set $S_l$ of points using the algorithm given in Section~\ref{sect:rd} for additive weighted points located in $\mathbb{R}^d$. 
For every edge $(r, s)$ in ${\cal G}_l$, we introduce an edge $(p, q)$ in $G$, where $r$ (resp. $s$) is the projection of $p$ (resp. $q$) on $l$. 
Recursively, we compute vertex-fault tolerant additive weighted spanner for points lying in simple polygon corresponding to $P$ (resp. $Q$). 
The recursion is continued till $P$ (resp. $Q$) contains exactly one vertex.
We first prove that this algorithm computes a $(k, (12+\epsilon))$-vertex fault-tolerant spanner.
Further, we modify this algorithm to compute a $(k, (4+\epsilon))$-vertex fault-tolerant spanner. 

\begin{algorithm}

    \SetKwInOut{Input}{Input}
    \SetKwInOut{Output}{Output}
    
    \Input{The polygonal domain $\mathcal{D}$, the set $S$ on $n$ points, non-negative weights associated to points in $S$, integer $k \geq 1$ and a real number $0 < \epsilon \leq 1$}
    
    \Output{$(k,(12+\epsilon))$-VFTS $\mathcal{G}$}
    
    \caption{$k$-\large A\small DDITIVE\large P\small OLYDOM\large C\small ULST\large FTS($S, \mathcal{D}, k, \epsilon$)}
    \label{alg:addpolyclustdomfts}

    \begin{algorithmic}[1]
        
    \STATE decompose $\mathcal{D}$ into $O(h)$ simple polygons such that each of these simple polygons have at most three splitting lines on its boundary
    
	    \STATE construct a planar graph ${\mathcal{G}}_{d}$ 
	    \scriptsize{} (description of ${\mathcal{G}}_{d}$ is in text) \normalsize{}
    
    \STATE initialize the set $X$ to contain all the vertices of $\mathcal{G}_d$
    
    \WHILE{$card(X) \geq 1$}
    
        \IF{$card(X) = 1$}
        
            \STATE $k$-\large A\small DDITIVE\large P\small OLYGON\large C\small ULST\large FTS($S \cap simplepoly(X), simplepoly(X), k,\epsilon$)
            
            \small // $simplepoly(X)$ denotes the simple polygon corresponding to the only vertex in $X$

        \ELSE
        
            \STATE compute a $O(\sqrt{h})$-separator $R$ for the graph $\mathcal{G}_d$ using Theorem \ref{thm4}; \\
	    let $P,Q$ and $R$ be the sets into which the vertices of $\mathcal{G}_d$ is partitioned \\
            
            \STATE for every $r \in R$, add the bounding splitting segments of the simple polygon corresponding \\
	    to $r$ to a set $H$
            
            \FOR{every $l \in H$}
        
                \FOR{every $p \in \mathcal{D} \cap S$}
            
                    \STATE find the projection $p_l$ of $p$ on $l$
                    
                    \STATE assign weight equal to $w(p) + d_{\pi}(p,p_l)$ to $p_l$
            
                    \STATE $S_l := S_l \cup \{p_l\}$
                
                \ENDFOR
        
                \STATE compute the $(k, (4+\epsilon))$-vertex fault-tolerant spanner for the set $S_l$ using Algorithm 1
        
                \STATE for every edge $(r,s)$ in $\mathcal{G}_l$, add the edge $(p,q)$ to $\mathcal{G}$, where\\
		$r$ (resp. $s$) is the projection of $p$(resp. $q$) on $l$
                
            \ENDFOR
        
            \STATE 
            
            $k$-\large A\small DDITIVE\large P\small OLYDOM\large C\small ULST\large FTS($S \cap unionpoly(P),\mathcal{D} \cap unionpoly(P),k,\epsilon$)
            
            \small // $unionpoly(P)$ denotes the set of simple polygons corresponding to each vertex of $P$
    
            \STATE 
            
            $k$-\large A\small DDITIVE\large P\small OLYDOM\large C\small ULST\large FTS($S \cap unionpoly(Q),\mathcal{D} \cap unionpoly(Q),k,\epsilon$)
            
            \small // $unionpoly(Q)$ denotes the set of simple polygons corresponding to each vertex of $Q$
        
        \ENDIF
            
    \ENDWHILE

    \end{algorithmic}
    \label{algo:polydom}

\end{algorithm}

\begin{lemma}
\label{lem20}
The spanner $G$ is a geodesic $(k, (12 + 15\epsilon))$-vertex fault-tolerant additive weighted spanner for points in $\cal{D}$.
\end{lemma}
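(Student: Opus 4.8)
──────────────────────────────────────────

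The plan is to mirror, almost verbatim, the argument used in Section~\ref{sect:simppoly} for the simple polygon case (the stretch analysis culminating in inequality~(\ref{eq22}) and~(\ref{eq23})), since the polygonal domain algorithm is structurally identical: at each level of the recursion we pick a set $H$ of splitting segments, project all surviving points onto each $l \in H$, build a $(k,(4+5\epsilon))$-VFTAWS ${\cal G}_l$ on the projected set $S_l$ via Theorem~\ref{thm:rd}, and lift its edges back to $G$. So I would first fix an arbitrary fault set $S' \subset S$ with $|S'| \le k$ and two points $p,q \in S \setminus S'$, and observe that there is some recursion level and some splitting segment $l \in H$ at that level with $p$ and $q$ on opposite sides of $l$ — this is exactly the property guaranteed by the $O(\sqrt h)$-separator decomposition (the separator $R$ disconnects $P$ from $Q$ in $\mathcal{G}_d$, so any $p,q$ ending up in different halves must be separated by a boundary segment of some $r \in R$), and it is the analogue of the "there exists a splitting segment $l$" claim in the simple-polygon proof.

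Next I would let $r$ be a point of $l \cap \pi(p,q)$ — here $\pi$ is the geodesic in the free space $\mathcal{D}$, and I need that $\pi(p,q)$ genuinely crosses $l$; this holds because $l$ does not cross any hole, so it separates $\mathcal{D}$ into two regions with $p$ and $q$ in different ones. Then the chain of inequalities goes through unchanged: the weight assigned to each projection $p_l$ is $w(p)+d_\pi(p,p_l)$, so $d_{l,w}(x_l,y_l) \ge d_{\pi,w}(x,y)$ by the triangle inequality (this is (\ref{eq18})); since ${\cal G}_l$ is a $(k,(4+5\epsilon))$-VFTAWS and the projections of $S'$ form a fault set of size $\le k$ in ${\cal G}_l$, there is a surviving $(4+5\epsilon)$-spanner path between $p_l$ and $q_l$, which lifts to a path in $G \setminus S'$ of length at most $(4+5\epsilon)\,d_{l,w}(p_l,q_l)$ (this is (\ref{eq19})); expanding $d_{l,w}(p_l,q_l)$ and using $d_\pi(p,p_l) \le d_\pi(p,r)$, $d_\pi(q,q_l) \le d_\pi(q,r)$ together with $\pi(p,q) = \pi(p,r)+\pi(r,q)$ gives $d_{G \setminus S'}(p,q) \le (4+5\epsilon)[d_{\pi,w}(p,q) + d_\pi(p_l,q_l)]$ (this is (\ref{eq22})); and the triangle-inequality bound $d_\pi(p_l,q_l) \le 2\,d_{\pi,w}(p,q)$ (this is (\ref{eq23})) finishes it with $d_{G\setminus S'}(p,q) \le 3(4+5\epsilon)\,d_{\pi,w}(p,q) = (12+15\epsilon)\,d_{\pi,w}(p,q)$.

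The only genuinely new ingredient, and the step I expect to be the main obstacle, is justifying the separation claim in the polygonal-domain setting: one must argue that the recursive separator-based decomposition of $\mathcal{D}$ does eventually place $p$ and $q$ in simple polygons lying in different halves ($P$ versus $Q$) at some level, so that a segment in the corresponding $H$ separates them — and crucially that this segment is a true geodesic barrier (does not pass through a hole), so that $\pi(p,q)$ must intersect it. This is really a property of the decomposition algorithm of \cite{conf/compgeom/AbamAHA15} rather than of the weights, so I would cite that construction: the splitting segments never cross holes by construction, and removing the simple polygons in $R$ disconnects those in $P$ from those in $Q$ in $\mathcal{G}_d$, hence geometrically separates the corresponding free-space regions; the recursion terminating when a part has one vertex guarantees every pair gets separated at some level. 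Once that is in place, every displayed inequality is identical to the simple-polygon proof, so I would simply reuse (\ref{eq18})--(\ref{eq23}) with $l$ now ranging over $H$ instead of being the single splitting segment. This yields the $(k,(12+15\epsilon))$-VFTAWS bound, i.e. a $(k,(12+\epsilon'))$-VFTAWS after rescaling $\epsilon$, which is the statement of Lemma~\ref{lem20}.
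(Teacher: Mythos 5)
Your proposal is correct and takes essentially the same route as the paper: the paper likewise reduces to the case where $\pi(p,q)$ crosses some splitting segment $l \in H$ (disposing of the same-side case by induction on $|S|$ together with the simple-polygon spanner) and then repeats verbatim the projection/lifting chain of inequalities from Section~\ref{sect:simppoly} to obtain $d_{\mathcal{G}\setminus S'}(p,q) \le 3(4+5\epsilon)\,d_{\pi,w}(p,q)$. The only cosmetic difference is that you justify the separation property directly from the $O(\sqrt{h})$-separator decomposition, whereas the paper packages the same fact as a two-case induction.
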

\begin{proof}
Using induction on the number of points, we show that there exists a $(12 + 15\epsilon)$-spanner path between $p$ and $q$ in $\mathcal{G} \setminus S^{'}$.
The induction hypothesis assumes that for the number of points $k' < |S|$, there exists a $(12 + 15\epsilon)$-spanner path between any two points belonging to $G \ S$.
Consider a set $S^{'} \subset S$ such that $|S^{'}| \leq k$ and two arbitrary points $p$ and $q$ from the set $S \setminus S^{'}$. 
Here, as described above, $P, Q$, and $R$ correspond to vertices of a planar graph ${\cal G}_d$.
The union of simple polygons that correspond to vertices of $P$ (resp. $Q, R$) is denoted with $poly(P)$ (resp. $poly(Q), poly(R)$).
Also, the set $H$ is as described in the algorithm.
Based on the location of $p$ and $q$, the following cases arise:  
(i) both $p$ and $q$ are lying in $P' \in \{poly(P), poly(Q)$, and $poly(R)\}$ and the geodesic Euclidean shortest path between $p$ and $q$ does not intersect any splitting segment from the set $H$, and
(ii) $p$ is lying in $P' \in \{poly(P), poly(Q), poly(R)\}$ and $q$ is lying in $P'' \in \{poly(P), poly(Q), poly(R)\}$ where $P' \ne P''$.
In case (i), if $P'$ is a simple polygon, then we can apply algorithm for simple polygons from Section~\ref{sect:simppoly} and obtain a $(4+14\epsilon)$-path between $p$ and $q$.
Otherwise, from the induction hypothesis, there exists a $(12 + 15\epsilon)$-path between $p$ and $q$.
In case (ii), a shortest path from $p$ and $q$ intersects at least one of the $O(\sqrt{h})$ splitting segments in $H$, say $l$. 
Let $\pi(p, q)$ be a shortest path between $p$ and $q$ that intersects $l$ at some point. 
Let $r$ be this point of intersection.
Since ${\mathcal{G}}_l$ is a $(k, (4+5\epsilon))$-VFTAWS, there exists a path $P'$ between $p_l$ and $q_l$ in ${\mathcal{G}}_l$ with length at most $(4+5\epsilon) d_{l,w}(p_l, q_l)$.
By replacing each vertex $x_l$ of $P'$ by $x \in S$ such that $x_l$ is the projection of $x$ on $l$, gives a path between $p$ and $q$ in $\mathcal{G} \setminus S^{'}$. 
Thus, the length of the path $d_{\mathcal{G} \setminus S^{'}}(p,q)$ is less than or equal to the length of the path $P'$ in $\mathcal{G}_l$.
For every $x,y \in S$,

{\setlength{\abovedisplayskip}{0pt}
\begin{flalign}
\hspace{6mm}d_{\pi,w}(x,y) &= w(x) + d_{\pi}(x,y) + w(y)&&\nonumber\\
        &\leq w(x) + d_{\pi}(x,x_l) + d_{\pi}(x_l,y_l) + d_{\pi}(y_l,y) + w(y)&&\nonumber\\
        &\text{[by the triangle inequality]}&&\nonumber\\
        &= w(x_l) + d_{\pi}(x_l,y_l) + w(y_l)&&\nonumber\\
    &\text{[since the weight associated with projection} \ z_l \ \text{of every point} \ z \ \text{is}&&\nonumber\\
    &w(z) + d_{\pi}(z,z_l)\text{]}&&\nonumber\\
        &=d_{l,w}(x_l,y_l)\label{eq39}&
\end{flalign}}

This implies, 
{\setlength{\abovedisplayskip}{0pt}
\begin{flalign}
\hspace{6mm}d_{\mathcal{G} \setminus S^{'}}(p,q) &= \sum_{x_l,y_l \in P} d_{\pi,w}(x,y)&&\nonumber\\
        &\leq \sum_{x_l,y_l \in P} d_{l,w}(x_l,y_l)&&\nonumber\\
        &\text{[\text{from} \ (\ref{eq39})]}&&\nonumber\\
        &\leq (4+5\epsilon).d_{l,w}(p_l,q_l)\label{eq40}&&\\
	&\text{[since} \ {\mathcal{G}}_l \ \text{is a geodesic} \ (k,(4+5\epsilon))\text{-VFTAWS]}&\nonumber\\
    &= (4+5\epsilon) \cdot [w(p_l) + d_l(p_l,q_l) + w(q_l)]&&\nonumber\\
        &= (4+5\epsilon) \cdot [w(p_l) + d_{\pi}(p_l,q_l) + w(q_l)]&&\nonumber\\
    &\text{[since} \ P \ \text{contains $l$, shortest path between } p_l \text{ and } q_l \text{ along } l&&\nonumber\\
        &\text{is same as the geodesic shortest path between } p_l \text{ and } q_l]&&\nonumber\\
        &= (4+5\epsilon) \cdot [w(p) + d_{\pi}(p,p_l) + d_{\pi}(p_l,q_l) + d_{\pi}(q_l,q) + w(q)]\label{eq20}&&\\
    &\text{[since the weight associated with projection} \ z_l \ \text{of every point} \ z \ \text{is}&&\nonumber\\
    &w(z) + d_{\pi}(z,z_l) \text{]}.&\nonumber
\end{flalign}}

\noindent
Since $r$ is a point belonging to both $l$ as well as to $\pi(p, q)$,
{\setlength{\abovedisplayskip}{0pt}
\begin{flalign}
\hspace{6mm}&d_{\pi}(p,p_l) \leq d_{\pi}(p,r) \ \text{and} \ d_{\pi}(q,q_l) \leq d_{\pi}(q,r).\label{eq21}&
\end{flalign}}

\noindent
Substituting (\ref{eq21}) into (\ref{eq20}),
{\setlength{\abovedisplayskip}{0pt}
\begin{flalign}
\hspace{6mm}d_{\mathcal{G} \setminus S'}(p,q) &\leq (4+5\epsilon) \cdot [w(p) + d_{\pi}(p,r) + d_{\pi}(p_l,q_l) + d_{\pi}(r,q) + w(q)]&&\nonumber\\
        &\leq (4+5\epsilon) \cdot [w(p) + d_{\pi}(p,r) + w(r) + d_{\pi}(p_l,q_l) + w(r) + d_{\pi}(r,q) + w(q)]&&\nonumber\\
        &\text{[since the weight associated with every point is non-negative]}&&\nonumber\\
        &= (4+5\epsilon) \cdot [d_{\pi,w}(p,r) + d_{\pi}(p_l,q_l) + d_{\pi,w}(r,q)]&&\nonumber\\
        &= (4+5\epsilon) \cdot [d_{\pi,w}(p,q) + d_{\pi}(p_l,q_l)]\label{eq22}&&\\
        &\text{[since} \ \pi(p,q) \ \text{intersects} \ l \text{ at} \ r\text{, by optimal substructure property of shortest}&&\nonumber\\ 
        &\text{paths,} \ \pi(p,q) = \pi(p,r) + \pi(r,q)\text{]}.&\nonumber
\end{flalign}}

\noindent
Consider
{\setlength{\abovedisplayskip}{0pt}
\begin{flalign}
\hspace{6mm}d_{\pi}(p_l,q_l) &\leq d_{\pi}(p_l,p) + d_{\pi}(p,q) + d_{\pi}(q,q_l)&&\nonumber\\
        &\text{[since} \ \pi \ \text{follows triangle inequality]}&&\nonumber\\
        &\leq d_{\pi}(r,p) + d_{\pi}(p,q) + d_{\pi}(q,r)&&\nonumber\\
        &\text{[using} \ (\ref{eq21})\text{]}&&\nonumber\\
        &\leq w(r) + d_{\pi}(r,p) + w(p) + w(p) + d_{\pi}(p,q) + w(q) + w(q) + d_{\pi}(q,r) + w(r)&&\nonumber\\
        &\text{[since weight associated with every point is non-negative]}&&\nonumber\\
        &= d_{\pi,w}(p,r) + d_{\pi,w}(p,q) + d_{\pi,w}(r,q)&&\nonumber\\
        &= d_{\pi,w}(p,q) + d_{\pi,w}(p,q)&&\nonumber\\
        &\text{[since} \ \pi(p,q) \ \text{intersects} \ l \text{ at } \ r\text{, by optimal substructure property of shortest}&&\nonumber\\ 
        &\text{paths,} \ \pi(p,q) = \pi(p,r) + \pi(r,q)\text{]}&&\nonumber\\
        &= 2d_{\pi,w}(p,q).\label{eq23}&
\end{flalign}}
Substituting (\ref{eq23}) into (\ref{eq22}), $d_{G \setminus S'}(p,q) \leq 3(4+5\epsilon) \cdot d_{\pi,w}(p,q)$.
\end{proof}

We further improve the stretch factor of $\mathcal{G}$ by applying the refinement given in \cite{conf/soda/AbamBS17} to the above-described algorithm.
In doing this, for each point $p \in S$, we compute the geodesic projection $p_{\gamma}$ of $p$ on the splitting line $\gamma$ and we construct a set $S(p,\gamma)$ as defined herewith. 
Let $\gamma(p) \subseteq \gamma$ be $\{ x \in \gamma : d_{\gamma,w}(p_{\gamma},x) \leq (1 + 2\epsilon) \cdot d_{\pi}(p,p_{\gamma}) \}$. 
Here, for any $p,q \in S$, $d_{\gamma ,w}(p, q)$ is equal to $0$ if $p = q$; otherwise, it is equal to $w(p) + d_\gamma(p,q) + w(q)$.
We divide $\gamma(p)$ into $c$ pieces with $c \in O(1/\epsilon^{2})$: each piece is denoted by $\gamma_{j}(p)$ for $1 \leq j \leq c$.
For every piece $j$, we compute the point $p_\gamma^{(j)}$ nearest to $p$ in $\gamma_j(p)$.
The set $S(p,\gamma)$ is defined as $\{ p_\gamma^{(j)} : p_\gamma^{(j)} \in \gamma_{j}(p)$  and $1 \leq j \leq c\}$. 
For every $r \in S(p,\gamma)$, the non-negative weight $w(r)$ of $r$ is set to $w(p) + d_{\pi}(p,r)$.
Let $S_\gamma$ be $\cup_{p \in S}S(p,\gamma)$.
We replace the set $S_l$ in computing $G$ with the set $S_\gamma$ and compute a $(k,(4+5\epsilon))$-VFTAWS $\mathcal{G}_l$ for the set $S_l$ using the algorithm for points in $\mathbb{R}^d$ given in Section~\ref{sect:rd}.
Further, for every edge $(r,s)$ in $\mathcal{G}_l$, we add the edge $(p,q)$ to $G$ such that $r \in S(p,l)$ and $s \in S(q,l)$.
The rest of the algorithm remains the same.

\begin{theorem}
\label{thm:polydom}
Let $S$ be a set of $n$ points in a polygonal domain $\cal D$ with non-negative weights associated to points via weight function $w$.
For any fixed constant $\epsilon > 0$, there exists a $(k, (4+\epsilon))$-vertex fault-tolerant additive weighted geodesic spanner with $O(\frac{k n \sqrt{h}}{\epsilon^{2}} \lg n)$ edges for the metric space $(S, d_{\pi, w})$.
\end{theorem}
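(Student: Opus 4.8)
The plan is to follow the same two-track template already used for Theorem~\ref{thm:simppoly}: bound the size of $\mathcal{G}$ through the recurrence induced by the planar separator, and bound the stretch by reducing an arbitrary demand pair to one that is separated by a segment of $H$ and then invoking the refined projection argument together with Lemma~\ref{lemfromabam}. Throughout, the $\epsilon$ appearing in the statement is understood to be obtained from the $\epsilon$ used inside the construction by a constant rescaling, exactly as in Theorems~\ref{thm:rd} and~\ref{thm:simppoly}.

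For the size bound, consider a node of the recursion that processes $n$ points inside a sub-domain of $\mathcal D$. By Theorem~\ref{thm4} the separator $R$ of the planar graph $\mathcal G_d$ has $|R|=O(\sqrt h)$, hence $H$ contains $O(\sqrt h)$ splitting segments; for each $l\in H$ the refined set $S_\gamma=\bigcup_{p}S(p,l)$ has $O(n/\epsilon^2)$ points because $|S(p,l)|\in O(1/\epsilon^2)$, and by Theorem~\ref{thm:rd} the spanner $\mathcal G_l$ built on it adds $O(kn/\epsilon^2)$ edges. So one node contributes $O(\sqrt h\,kn/\epsilon^2)$ edges, and since the separator is balanced with respect to the point-weight, each child handles at most $\tfrac23 n$ points. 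The recurrence $S(n)\le S(n_1)+S(n_2)+O(\sqrt h\,kn/\epsilon^2)$ with $n_1+n_2\le n$ and $n_1,n_2\le\tfrac23 n$ then solves to $S(n)=O\!\big(\tfrac{kn\sqrt h}{\epsilon^2}\lg n\big)$, as claimed.

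For the stretch, fix $S'\subset S$ with $|S'|\le k$ and $p,q\in S\setminus S'$, and induct on $|S|$. If $p$ and $q$ lie in the same part among $\{poly(P),poly(Q),poly(R)\}$ and $\pi(p,q)$ avoids every segment of $H$, then either that part is a single simple polygon, where Theorem~\ref{thm:simppoly} supplies a $(4+\epsilon)$-path in $\mathcal G\setminus S'$, or the induction hypothesis applies to a strictly smaller instance. Otherwise $\pi(p,q)$ is forced to cross some $l\in H$ (a path from a polygon of $P$ to one of $Q$ must pass through a polygon of $R$ since $\mathcal G_d$ has no $P$–$Q$ edge); let $r\in l\cap\pi(p,q)$. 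Exactly as in Lemma~\ref{lem20}, the $(k,(4+5\epsilon))$-VFTAWS $\mathcal G_l$ on the projection set yields a path in $\mathcal G\setminus S'$ of length at most $(4+5\epsilon)\,d_{l,w}(p_l',q_l')$, where $p_l',q_l'$ are chosen as in Lemma~\ref{lemfromabam}. Then the refinement (replacing $S_l$ by $S_\gamma$, and using $d_\pi(p,p_l')+d_l(p_l',r)\le(1+\epsilon)d_\pi(p,r)$ together with the symmetric bound at $q$, and the optimal-substructure identity $\pi(p,q)=\pi(p,r)+\pi(r,q)$) gives $d_{l,w}(p_l',q_l')\le(1+\epsilon)\,d_{\pi,w}(p,q)$, so that $d_{\mathcal G\setminus S'}(p,q)\le(4+5\epsilon)(1+\epsilon)\,d_{\pi,w}(p,q)\le(4+14\epsilon)\,d_{\pi,w}(p,q)$ for $\epsilon\le1$; rescaling $\epsilon$ yields stretch $4+\epsilon$.

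I expect the main obstacle to be the interplay between the planar-separator recursion and the geodesic structure rather than any single inequality, since all the metric estimates are already carried out in Lemma~\ref{lem20} and Theorem~\ref{thm:simppoly}. Specifically one must verify that (a) the separator, balanced by number of points and not by number of holes, still caps the per-node segment count at $O(\sqrt h)$ and keeps the recursion depth $O(\lg n)$; (b) the segments of $H$ are genuinely non-crossing with the holes, so that geodesic projections onto $l$ and the partition of $l(p)$ into $O(1/\epsilon^2)$ pieces behave verbatim as in the simple-polygon case; and (c) the induction is set up over the right parameter (the number of points inside the current sub-domain) so that the ``path stays within one part'' case reduces to a strictly smaller instance, with the simple-polygon base case invoked through Theorem~\ref{thm:simppoly}. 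Once this bookkeeping is in place, the proof closes with the chain of inequalities above.
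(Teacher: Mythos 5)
Your proposal matches the paper's proof in both structure and substance: the same separator-based recurrence for the edge count (with $O(\sqrt h)$ splitting segments per node, each contributing $O(kn/\epsilon^2)$ edges via Theorem~\ref{thm:rd}), and the same induction-plus-projection argument for the stretch, reducing to a segment $l\in H$ crossed by $\pi(p,q)$ and bounding $d_{l,w}(p_l',q_l')$ via the refined sets $S(p,\gamma)$. If anything, your invocation of Lemma~\ref{lemfromabam} with the $(1+\epsilon)$ factor is more careful than the paper's own write-up, which asserts $d_\pi(p,p_l')+d_l(p_l',r)\le d_\pi(p,r)$ outright (attributing it to the triangle inequality) and so lands on $(4+5\epsilon)$ where you get $(4+14\epsilon)$; either way the constant rescaling of $\epsilon$ yields the claimed $(4+\epsilon)$.
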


\begin{proof}
Let $S(n)$ be the size of $\mathcal{G}$. 
Our algorithm adds $O(\frac{k n \sqrt{h}}{\epsilon^{2}})$ edges at each recursive level except for the last level.
At every leaf node $l$ of the recurrence tree, we add $O(\frac{k n_x}{\epsilon^{2}} \lg n_x)$ edges, where $n_{x}$ is the number of points inside the simple polygon corresponding to $l$. 
Hence, the number of edges of $G$ is $O(\frac{k n \sqrt{h}}{\epsilon^{2}} \lg n)$.

Next, we prove the stretch factor of the spanner.
Consider any set $S^{'} \subset S$ such that $|S^{'}| \leq k$ and two arbitrary points $p$ and $q$ from the set $S \setminus S^{'}$. 
We show that there exists a $(4 + 14\epsilon)$-spanner path between $p$ and $q$ in $\mathcal{G} \setminus S^{'}$.  
If $r \notin l(p)$, then we set $p_l'$ (resp. $q_l'$) equal to $p_l$ (resp. $q_l$). 
Otherwise, $p_l'$ (resp. $q_l'$) is set as the point from $S(p,l)$ (resp. $S(q,l)$) that is nearest to $p$ (resp. $q$).
(The $r$ is defined before the theorem statement.) 

{\setlength{\abovedisplayskip}{0pt}
\begin{flalign}
\hspace{6mm}d_{l,w}(p_l',q_l') &= w(p_l') + d_l(p_l', q_l') + w(q_l')&&\nonumber\\
        &\leq w(p_l') + d_l(p_l',r) + d_l(r,q_l') + w(q_l')&&\nonumber\\
        &\text{[by the triangle inequality]}&&\nonumber\\
        &\leq w(p_l') + d_l(p_l',r) + w(r) + w(r) +  d_l(r,q_l') + w(q_l')&&\nonumber\\
        &\text{[since the weight associated with each point is non-negative]}&&\nonumber\\
    &= w(p) + d_{\pi}(p,p_l') + d_l(p_l',r) + w(r) + w(r) + d_l(r,q_l') +&\nonumber\\
    &d_{\pi}(q_l',q) + w(q)\label{eq41}&&\\
        &\text{[due to the assignment of the weight to the projection of any point]}&\nonumber
\end{flalign}}

\noindent
From the triangle inequality, we know the following: 
{\setlength{\abovedisplayskip}{0pt}
\begin{flalign}
\hspace{6mm}&d_{\pi}(p,p_l') + d_l(p_l',r) \leq d_{\pi}(p,r)\text{, and} \label{eq25'}\\
\hspace{6mm}&d_l(r,q_l') + d_{\pi}(q_l',q) \leq d_{\pi}(r,q).\label{eq25}&
\end{flalign}}

\noindent
Substituting (\ref{eq25'}) and (\ref{eq25}) in (\ref{eq41}),
{\setlength{\abovedisplayskip}{0pt}
\begin{flalign}
\hspace{6mm}d_{l,w}(p_l',q_l') &\leq w(p) + d_\pi(p,r) + w(r) + w(r) + d_\pi(r,q) + w(q)&&\nonumber\\
        &= d_{\pi,w}(p,r) + d_{\pi,w}(r,q)&&\nonumber\\
        &= d_{\pi,w}(p,q)\label{eq26}&&\\
        &[\text{since } r \in l \cap \pi(p,q), \text{ by the optimal substructure property of shortest}&&\nonumber\\ 
        &\text{paths,} \ \pi(p,q) = \pi(p,r) + \pi(r,q)].&\nonumber
\end{flalign}}

Replacing $p_l$ (resp. $q_l$) by $p_l'$ (resp. $q_l'$) in inequality (\ref{eq40}),
{\setlength{\abovedisplayskip}{0pt}
\begin{flalign}
\hspace{6mm}d_{\mathcal{G} \setminus S^{'}}(p,q) &\leq (4+5\epsilon).d_{l,w}(p_l',q_l')&&\nonumber\\
	&\leq (4+5\epsilon)d_{\pi,w}(p,q) \hspace{0.2in} \text{[from} \ (\ref{eq26})\text{].}&&\nonumber
\end{flalign}}
Thus, $G$ is a geodesic $(k, (4 + \epsilon))$-VFTAWS for $S$. 
\end{proof}

\section{Vertex fault-tolerant additive weighted spanner for points on a terrain}
\label{sect:terrains}

In this section, we present an algorithm to compute a geodesic $(k,(4+\epsilon))$-VFTAWS with $O(\frac{kn\lg{n}}{\epsilon^2})$ edges for any given set $S$ of $n$ non-negative weighted points lying on a polyhedral terrain $\mathcal{T}$. 
We denote the boundary of $\mathcal{T}$ with $\partial \mathcal{T}$. 
The following distance function $d_{\mathcal{T},w}: S \times S \rightarrow \rm I\!R \cup \{0\}$ is used to compute the geodesic distance on $\mathcal{T}$ between any two points $p, q \in S$: $d_{\mathcal{T},w}(p,q) = w(p) + d_{\mathcal{T}}(p,q) + w(q)$.
Here, $w(p)$ (resp. $w(q)$) is the non-negative weight of $p \in S$ (resp. $q \in S$).
We denote a geodesic Euclidean shortest path between any two points $a$ and $b$ on $\mathcal{T}$ with $\pi(a, b)$.
For any two points $x, y \in \partial \mathcal{T}$, we denote the closed region lying to the right (resp. left) of $\pi(x,y)$ when going from $x$ to $y$, including the points lying on the shortest path $\pi(x,y)$ with $\pi^{+}(x,y)$ (resp. $\pi^{-}(x,y)$).
The {\it projection} $p_\pi$ of a point $p$ on the shortest path $\pi$ between two points lying on the polyhedral terrain $\mathcal{T}$ is defined as a point on $\pi$ that is at the minimum geodesic distance from $p$ among all the points located on $\pi$.
For three points $u,v,w \in \mathcal{T}$, the closed region bounded by shortest paths $\pi(u,v)$, $\pi(v,w)$, and $\pi(w,u)$ is termed {\it sp-triangle}, denoted with $\Delta(u, v, w)$.
If the points $u, v, w \in \mathcal{T}$ are clear from the context, we denote the sp-triangle with $\Delta$.
In the following, we restate a Theorem from \cite{conf/soda/AbamBS17}, which is useful for our analysis.

\begin{theorem}
\label{thm7}
For any set $P$ of $n$ points on a polyhedral terrain $\mathcal{T}$, there exists a balanced sp-separator: a shortest path $\pi(u,v)$ connecting two points $u,v \in \partial \mathcal{T}$ such that $\frac{2n}{9} \leq |\pi^{+}(u,v) \cap P| \leq \frac{2n}{3}$, or a sp-triangle $\Delta$ such that $\frac{2n}{9} \leq |\Delta \cap P| \leq \frac{2n}{3}$.
\end{theorem}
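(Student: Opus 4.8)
The statement is the terrain counterpart of the classical polygon-cutting and planar-separator theorems, so the plan is to produce the separator by a continuity (intermediate-value) argument over a sweeping family of boundary-to-boundary shortest paths, using the sp-triangle alternative precisely to absorb the discontinuities of the sweep. Since $\calT$ is homeomorphic to a closed disk, every shortest path $\pi(u,v)$ with $u,v\in\partial\calT$ splits $\calT$ into the two regions $\pi^{+}(u,v)$ and $\pi^{-}(u,v)$. First I would ask whether some such path is already balanced, i.e. $\frac{2n}{9}\le|\pi^{+}(u,v)\cap P|\le\frac{2n}{3}$ (exchanging the two sides merely relabels $u\leftrightarrow v$); if so we have a separating shortest path and are done. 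Otherwise, after such an exchange, I may assume the path under consideration has fewer than $\frac{2n}{9}$ points on its $\pi^{-}$ side, hence more than $\frac{7n}{9}$ on its $\pi^{+}$ side.

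Next I would sweep. Let $\alpha\subset\partial\calT$ be the boundary arc from $u$ to $v$ bounding $\pi^{+}(u,v)$; as $w$ travels along $\alpha$ from $u$ to $v$, let $R(w)\subseteq\pi^{+}(u,v)$ be the region bounded by $\pi(u,w)$ together with the sub-arc of $\alpha$ from $u$ to $w$, and put $g(w)=|R(w)\cap P|$. Because shortest paths emanating from the common source $u$ do not cross (in particular $\pi(u,w)$ cannot cross $\pi(u,v)$, so it stays inside $\pi^{+}(u,v)$), the regions $R(w)$ are nested, $g$ is non-decreasing, and $g$ ranges from $0$ at $w=u$ to $|\pi^{+}(u,v)\cap P|>\frac{7n}{9}$ at $w=v$. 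If $g$ ever attains a value in $[\frac{2n}{9},\frac{2n}{3}]$, the associated $\pi(u,w)$ is a balanced separating shortest path. The only remaining possibility is that $g$, being an integer-valued step function of the continuous parameter $w$, leaps over this interval at a critical position $w^{\ast}$ at which $\pi(u,w)$ is discontinuous --- namely where $w$ crosses a shadow boundary of the shortest-path map of $u$ and $\pi(u,w)$ snaps around one or more terrain vertices.

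At such a $w^{\ast}$ there are two shortest paths $\pi_{1},\pi_{2}$ from $u$ to $w^{\ast}$, the one-sided limits of $\pi(u,w)$, and the region $B$ they enclose is exactly the set of points that change sides during the snap; hence $|B\cap P|$ equals the size of the jump, which exceeds $\frac{2n}{3}-\frac{2n}{9}=\frac{4n}{9}>\frac{2n}{9}$. This $B$ is already realizable as an sp-triangle: picking any point $m$ on $\pi_{2}$ gives $\Delta(u,m,w^{\ast})$, whose bounding shortest paths are the two subpaths of $\pi_{2}$ together with $\pi_{1}$. If $|B\cap P|\le\frac{2n}{3}$ we are done; otherwise I would recurse inside $B$, sweeping shortest paths across $B$ and re-applying the same dichotomy to $(B,P\cap B)$, maintaining the invariant that a region with too many points always admits either a further boundary-to-boundary cut or an sp-triangle carving, so that after boundedly many steps one reaches a region whose point count falls in $[\frac{2n}{9},\frac{2n}{3}]$.

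The hard part is precisely this quantitative bookkeeping: the continuity argument by itself only shows that the obstruction to a balanced shortest-path cut is an sp-triangle of size \emph{at least} $\frac{4n}{9}$, and one must argue --- by choosing the sweep carefully and, if necessary, iterating the construction a bounded number of times --- that it can always be trimmed to a region with \emph{between} $\frac{2n}{9}$ and $\frac{2n}{3}$ points; the asymmetry between the fractions $\frac{2}{9}$ and $\frac{2}{3}$ (as opposed to a symmetric $\frac{1}{3},\frac{2}{3}$) is exactly what lets this slack accounting close. The remaining, more routine, obstacles are non-generic configurations --- points of $P$ lying on a shortest path used by the sweep, or several terrain vertices becoming critical at a common parameter value --- which I would dispose of by an infinitesimal perturbation of $u$ or of the sweeping direction together with a consistent tie-breaking rule.
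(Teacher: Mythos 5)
First, note that the paper does not prove this statement at all: Theorem~\ref{thm7} is explicitly \emph{restated} from \cite{conf/soda/AbamBS17}, so there is no in-paper proof to compare against; your attempt has to be judged against the argument in that reference, whose first stage (a continuity sweep of shortest paths $\pi(u,w)$ from a fixed boundary source, with the sp-triangle alternative absorbing the jump at a discontinuity of the shortest-path map) your proposal does capture correctly. The monotonicity of $g$ via non-crossing of shortest paths from a common source, the observation that a jump over $[\frac{2n}{9},\frac{2n}{3}]$ at a continuity point is impossible after perturbation, and the identification of the jump region $B$ as a (degenerate) sp-triangle bounded by the two limit paths $\pi_1,\pi_2$ are all sound.

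The genuine gap is the case $|B\cap P|>\frac{2n}{3}$, which you yourself flag as ``the hard part'' and then dispose of with ``recurse inside $B$.'' As stated, this recursion does not work. (i) A boundary-to-boundary cut of $B$ has its endpoints on $\pi_1\cup\pi_2$, not on $\partial\mathcal{T}$, so it cannot be output as a shortest-path separator; it must be packaged as an sp-triangle, and an sp-triangle is by definition a region bounded by exactly the three pairwise shortest paths among three points of $\mathcal{T}$. A generic cut of the lens $B$ by a shortest path between a point of $\pi_1$ and a point of $\pi_2$ produces a region bounded by \emph{four} shortest-path arcs, which is neither allowed separator type, and iterating only makes the boundary complexity worse. (ii) Even restricting to cuts of the form $\Delta(u,m,w^{\ast})$ with $m$ on one of the bounding paths does not obviously subdivide $B$, since the relevant shortest path $\pi(m,w^{\ast})$ may simply be the subpath of $\pi_2$, reproducing $B$ itself; you need a sweep whose swept regions are provably sp-triangles and whose count function is again monotone with controlled jumps. (iii) There is no termination or invariant argument showing that finitely many iterations land the count in $[\frac{2n}{9},\frac{2n}{3}]$ rather than producing another oversized jump region; indeed the specific constants $\frac{2}{9}=\frac{1}{3}\cdot\frac{2}{3}$ are exactly the signature of a carefully designed \emph{two}-level argument (a second sweep inside the lens losing a factor $3$), not of an open-ended recursion. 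Until the second stage is made precise --- what is swept inside $B$, why each swept region is an sp-triangle, and why its count function cannot jump over the target window --- the proof is incomplete at its quantitatively essential step.
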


Thus, an sp-separator is either bounded by a shortest path (in the former case) or by three shortest paths (in the latter case).
Let $\gamma$ be a shortest path that belongs to an sp-separator.
First, a balanced sp-separator as given in Theorem~\ref{thm7} is computed. 
The sets $S_{in}$ and $S_{out}$ comprising of points are defined as follows: if the sp-separator is a shortest path then define $S_{in}$ to be $\gamma^{+}(u,v) \cap S$; otherwise, $S_{in}$ is $\Delta \cap S$; points in $S$ that do not belong to $S_{in}$ are in $S_{out}$.
For each $p \in S$, we compute the projection $p_{\gamma}$ of $p$ on every shortest path $\gamma$ of sp-separator, and associate a weight $d_{\mathcal{T}}(p, p_{\gamma})$ with $p_{\gamma}$.
Let $S_{\gamma}$ be a set defined as $\cup_{p \in S}\hspace{0.02in} p_{\gamma}$.
Our algorithm computes a $(2+\epsilon)$-spanner ${\mathcal{G}}_{\gamma}$ for the weighted points in $S_{\gamma}$.
Further, for each edge $(p_{\gamma}, q_{\gamma})$ in ${\mathcal{G}}_{\gamma}$, an edge $(p,q)$ is added to $\mathcal{G}$, where $p_{\gamma}$ (resp. $q_{\gamma}$) is the projection of $p$ (resp. $q$) on $\gamma$. 
The spanners for the sets $S_{in}$ and $S_{out}$ are computed recursively, and the edges from these spanners are added to $\mathcal{G}$.
In the base case, if $|S| \leq 3$ then a complete graph on the set $S$ is constructed. 
We first obtain a $(k, (12+15\epsilon))$-vertex fault-tolerant additive weighted spanner for the set $S$ of points lying on the terrain $\mathcal{T}$.
(This construction is later modified to compute a $(k, (4+\epsilon))$-VFTAWS.)
In specific, with every projected point $p_{\gamma}$, instead of associating $d_{\mathcal{T}}(p, p_{\gamma})$ as the weight of $p_{\gamma}$, we associate $w(p)+d_{\mathcal{T}}(p, p_{\gamma})$ as the weight of $p_{\gamma}$.
The rest of the algorithm in constructing $\cal{G}$ remains the same as in \cite{conf/soda/AbamBS17}.

\begin{algorithm}[H]

    \SetKwInOut{Input}{Input}
    \SetKwInOut{Output}{Output}
    
    \Input{The polyhedral terrain $\mathcal{T}$, the set $S$ on $n$ additive weighted points, integer $k \geq 1$ and a real number $0 < \epsilon \leq 1$}
    
    \Output{$(k,(12+\epsilon))$-VFTS $\mathcal{G}$}
    
    \caption{$k$-\large A\small DDITIVE\large T\small ERRAIN\large C\small LUST\large FTS($S, \mathcal{D}, k, \epsilon$)}
    \label{alg:addterclustfts}
    
    \begin{algorithmic}[1]
    
    \WHILE{$card(\mathcal{T} \cap S) \geq 1$}
        
        \STATE compute a balanced sp-separator for $\mathcal{T}$ using the algorithm given in \cite{conf/soda/AbamBS17}
        
        \STATE if the balanced sp-separator is a shortest path $\pi(u,v)$ between $u,v \in \partial \mathcal{T}$, then define set $S_{in}$ to be $\pi^{+}(u,v) \cap S$; otherwise, define set $S_{in}$ to be $\Delta \cap S$, where $\Delta$ is a sp-triangle.
        
        \STATE further, define the set $S_{out}$ to be $S \setminus S_{in}$
        
        \FOR{every bounding shortest path $\gamma$ of the sp-separator}
    
            \STATE initialize the set $S_\gamma$ of the projections to $\phi$
    
            \FOR{every $p \in S$}
    
                \STATE find the projection $p_\gamma$ of $p$ on $\gamma$
            
                \STATE assign a weight equal to $w(p) + d_{\mathcal{T}}(p,p_\gamma)$ to $p_\gamma$
        
                \STATE $S_\gamma := S_\gamma \cup \{p_\gamma\}$
        
            \ENDFOR
    
            \STATE compute a $(k,(4+\epsilon))$-vertex fault-tolerant spanner $\mathcal{G}_\gamma$ for the set $S_\gamma$ using Algorithm 1
    
            \STATE for every edge $(r,s)$ in $\mathcal{G}_\gamma$, add the edge $(p,q)$ to $\mathcal{G}$ where $r$(resp. $s$) is the projection of $p$ \\
	    (resp. $q$) on $l$
        
        \ENDFOR
    
        \STATE $k$-\large A\small DDITIVE\large T\small ERRAIN\large C\small LUST\large FTS($S_{in}, X, k, \epsilon$)
        
        \small //$X$ set to is $\pi^{+}(u,v)$ if the balanced sp-separator is a shortest path $\pi(u,v)$; else it is set to $\Delta$
    
        \STATE $k$-\large A\small DDITIVE\large T\small ERRAIN\large FTS($S_{out}, Y, k, \epsilon$)
        
        \small // $Y$ set to is $\mathcal{T} \setminus \pi^{+}(u,v)$ if the balanced sp-separator is a shortest path $\pi(u,v)$; else it is set to $\mathcal{T} \setminus \Delta$
        
    \ENDWHILE
    \end{algorithmic}

\end{algorithm}

To prove the graph $\mathcal{G}$ is a geodesic $(k,(12+15\epsilon))$-VFTAWS for the points in $S$, we use induction on the number of points. 
Consider any set $S' \subset S$ such that $|S'| \leq k$ and two arbitrary points $p$ and $q$ from the set $S \setminus S'$.
We show that there exists a path between $p$ and $q$ in ${\cal G} \setminus S'$ such that $d_{\cal G}(p, q)$ is at most $(12+15\epsilon)d_{\pi,w}(p,q)$.
The induction hypothesis assumes that for the number of points $k' < |S|$ in a region of $\mathcal{T}$, there exists a $(12+15\epsilon)$-spanner path between any two points belonging to the given region in $\mathcal{G} \setminus S'$. 
As part of the inductive step, we extend it to $n$ points.
For the case of both $p$ and $q$ are on the same side of a bounding shortest path $\gamma$ of the balanced separator, i.e., both are in $S_{in}$ or $S_{out}$, by induction hypothesis (as the number of points in $S_{in}$ or $S_{out}$ is less than $|S|$), there exists a $(12+15\epsilon)$-spanner path between $p$ and $q$ in $\mathcal{G} \setminus S'$. 
The only case remains to be proved is when $p$ lies on one side of $\gamma$ and $q$ lies on the other side of $\gamma$, i.e., $p \in S_{in}$ and $q \in S_{out}$ or, $q \in S_{in}$ and $p \in S_{out}$.
W.l.o.g., we assume that the former holds. 
Let $r$ be a point on $\gamma$ at which the geodesic shortest path $\pi(p,q)$ between $p$ and $q$ intersects $\gamma$.
Since ${\mathcal{G}}_\gamma$ is a $(k, (4+5\epsilon))$-VFTS, there exists a path $P$ between $p_\gamma$ and $q_\gamma$ in ${\mathcal{G}}_\gamma$ of length at most $(4+5\epsilon).d_{\gamma,w}(p_\gamma, q_\gamma)$. 
Let $P'$ be the path obtained by replacing each vertex $x_\gamma$ of $P$ by $x \in S$ such that $x_\gamma$ is the projection of $x$ on $\gamma$. 
Note that the path $P'$ is between nodes $p$ and $q$ in $\mathcal{G} \setminus S'$.
The length $d_{\mathcal{G} \setminus S'}(p,q)$ of path $P'$ is less than or equal to the length of the path $P$ in $\mathcal{G}_\gamma$.
In the following, we show that $d_{\mathcal{G} \setminus S'}(p,q) \le (12+15\epsilon)d_{{\mathcal T}, w}(p,q)$.
\hfil\break

\noindent For every $x,y \in S$,
{\setlength{\abovedisplayskip}{0pt}
\begin{flalign}
\hspace{6mm}d_{\mathcal{T},w}(x,y) &= w(x) + d_{\mathcal{T}}(x,y) + w(y)&&\nonumber\\
        &\leq w(x) + d_{\mathcal{T}}(x,x_{\gamma}) + d_{\mathcal{T}}(x_{\gamma},y_{\gamma}) + d_{\mathcal{T}}(y_{\gamma},y) + w(y)&&\nonumber\\
        &\text{[by the triangle inequality]}&&\nonumber\\
        &= w(x_{\gamma}) + d_{\mathcal{T}}(x_{\gamma},y_{\gamma}) + w(y_{\gamma})&&\nonumber\\
    &\text{[since the weight associated with projection} \ z_{\gamma} \ \text{of every point} \ z \ \text{is }&&\nonumber\\
    &w(z) + d_{\mathcal{T}}(z,z_{\gamma})\text{]}&&\nonumber\\
        &=d_{\gamma,w}(x_{\gamma},y_{\gamma}).\label{eq43}&
\end{flalign}}

\noindent
This implies,
{\setlength{\abovedisplayskip}{0pt}
\begin{flalign}
\hspace{6mm}d_{\mathcal{G} \setminus S'}(p,q) &= \sum_{x_\gamma,y_\gamma \in P} d_{\mathcal{T},w}(x,y)&&\nonumber\\
        &\leq \sum_{x_\gamma,y_\gamma \in P} d_{\gamma,w}(x_\gamma,y_\gamma)&&\nonumber\\
        &\text{[from} \ (\ref{eq43})\text{]}&&\nonumber\\
        &\leq (4+5\epsilon).d_{\gamma,w}(p_{\gamma},q_{\gamma})\label{eq44}&&\\
        &\text{[since} \ {\mathcal{G}}_{\gamma} \ \text{is a} \ (k,(4+5\epsilon)) \text{-vertex fault tolerant geodesic spanner]}&&\nonumber\\
        &= (4+5\epsilon).[w(p_{\gamma}) + d_{\mathcal{T}}(p_{\gamma},q_{\gamma}) + w(q_{\gamma})]&&\nonumber\\
        &\text{[since} \ \gamma \ \text{is a shortest path on} \ \mathcal{T}\text{, shortest path between any two}&&\nonumber\\
        &\text{points on} \ \gamma \ \text{is a geodesic shortest path on }\mathcal{T}]&&\nonumber\\
        &= (4+5\epsilon).[w(p) + d_{\mathcal{T}}(p,p_{\gamma}) + d_{\mathcal{T}}(p_{\gamma},q_{\gamma}) + d_{\mathcal{T}}(q_{\gamma},q) + w(q)]\label{eq45}&&\\
        &\text{[since the weight associated with projection} \ z_{\gamma} \ \text{is } \ w(z) + d_{\mathcal{T}}(z,z_{\gamma})\text{]}.&\nonumber
\end{flalign}}

\noindent
By the definition of projection of any point on $\gamma$, we know that
{\setlength{\abovedisplayskip}{0pt}
\begin{flalign}
\hspace{6mm}&d_{\mathcal{T}}(p,p_{\gamma}) \leq d_{\mathcal{T}}(p,r) \ \text{and} \ d_{\mathcal{T}}(q,q_{\gamma}) \leq d_{\mathcal{T}}(q,r).\label{eq46}&
\end{flalign}}

\noindent
Substituting (\ref{eq46}) into (\ref{eq45}),
{\setlength{\abovedisplayskip}{0pt}
\begin{flalign}
\hspace{6mm}d_{\mathcal{G} \setminus S'}(p,q) &\leq (4+5\epsilon).[w(p) + d_{\mathcal{T}}(p,r) + d_{\mathcal{T}}(p_{\gamma},q_{\gamma}) + d_{\mathcal{T}}(r,q) + w(q)]&&\nonumber\\
        &\leq (4+5\epsilon).[w(p) + d_{\mathcal{T}}(p,r) + w(r) + d_{\mathcal{T}}(p_{\gamma},q_{\gamma}) + w(r) + d_{\mathcal{T}}(r,q) + w(q)]&&\nonumber\\
        &\text{[since the weight of every point is non-negative]}&&\nonumber\\
        &= (4+5\epsilon).[d_{\mathcal{T},w}(p,r) + d_{\mathcal{T}}(p_{\gamma},q_{\gamma}) + d_{\mathcal{T},w}(r,q)]&&\nonumber\\
        &= (4+5\epsilon).[d_{\mathcal{T},w}(p,q) + d_{\mathcal{T}}(p_{\gamma},q_{\gamma})]\label{eq47}&&\\
    &\text{[since} \ \pi(p,q) \  \text{intersects} \ \gamma \ \text{at} \ r\text{]}&&\nonumber\\ 
        &\text{paths} \ \pi(p,q) = \pi(p,r) + \pi(r,q)\text{]}.&\nonumber
\end{flalign}}

\noindent
Further,
{\setlength{\abovedisplayskip}{0pt}
\begin{flalign}
\hspace{6mm}d_{\mathcal{T}}(p_{\gamma},q_{\gamma}) &\leq d_{\mathcal{T}}(p_{\gamma},p) + d_{\mathcal{T}}(p,q) + d_{\mathcal{T}}(q,q_{\gamma})&&\nonumber\\
        &\text{[by the triangle inequality]}&&\nonumber\\
        &\leq d_{\mathcal{T}}(r,p) + d_{\mathcal{T}}(p,q) + d_{\mathcal{T}}(q,r)&\nonumber\\
        &\text{[using} \ (\ref{eq46})\text{]}&&\nonumber\\
    &\leq w(r) + d_{\mathcal{T}}(r,p) + w(p) + w(p) + d_{\mathcal{T}}(p,q) + w(q) + w(q) + &&\nonumber\\
    &d_{\mathcal{T}}(q,r) + w(r)&\nonumber\\
        \hspace{26mm}&\text{[since the weight of every point is non-negative]}&&\nonumber\\
        &= d_{\mathcal{T},w}(p,r) + d_{\mathcal{T},w}(p,q) + d_{\mathcal{T},w}(r,q)&&\nonumber\\
        &= d_{\mathcal{T},w}(p,q) + d_{\mathcal{T},w}(p,q)&&\nonumber\\
    &\text{[since} \ \pi(p,q) \  \text{intersects} \ \gamma \ \text{at} \ r\text{]}&&\nonumber\\
        &= 2d_{\mathcal{T},w}(p,q).\label{eq48}&
\end{flalign}}
Substituting (\ref{eq48}) into (\ref{eq47}) yields, $d_{\mathcal{G} \setminus S'}(p,q) \leq 3(4+5\epsilon).d_{\mathcal{T},w}(p,q)$.

We improve the stretch factor of $\mathcal{G}$ by applying the same refinement as the one used in the algorithm in Section~\ref{sect:polydom}.
Again, we denote the graph resulted after applying that refinement with $\cal{G}$.

\begin{theorem}
\label{lem26}
Let $S$ be a set of $n$ weighted points on a polyhedral terrain $\cal{T}$ with non-negative weights associated to points via weight function $w$.
For any fixed constant $\epsilon > 0$, there exists a $(k, (4+\epsilon))$-vertex fault-tolerant additive weighted geodesic spanner with $O(\frac{k n}{\epsilon^{2}}\lg{n})$ edges.
\end{theorem}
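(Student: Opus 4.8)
The argument is the terrain counterpart of the proofs of Theorem~\ref{thm:simppoly} and Theorem~\ref{thm:polydom}, with the balanced sp-separator of Theorem~\ref{thm7} playing the role of a splitting segment. The first step is the size bound. At each recursive call the algorithm fixes a balanced sp-separator and, for each of its at most three bounding shortest paths $\gamma$, forms the refined projected set $S_\gamma=\bigcup_{p\in S}S(p,\gamma)$ of $O(n/\epsilon^{2})$ weighted points and runs Algorithm~\ref{alg:addfts} on the metric space $(S_\gamma,d_{\gamma,w})$; by Theorem~\ref{thm:rd} this yields a $(k,(4+5\epsilon))$-VFTAWS with $O(k\,|S_\gamma|)=O(kn/\epsilon^{2})$ edges, each of which induces one edge of $\mathcal{G}$. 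Since $S_{in}$ and $S_{out}$ partition $S$ and, by Theorem~\ref{thm7}, each contains at most $\frac{7n}{9}$ points, the recursion has depth $O(\lg n)$ and the point sets at any fixed level sum to at most $n$; summing $O(kn/\epsilon^{2})$ per level over $O(\lg n)$ levels (the $O(1)$-size complete graphs built at the leaves contributing only $O(n)$ in total) gives the claimed $O(\frac{kn}{\epsilon^{2}}\lg n)$ edges.

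For the stretch factor the plan is to induct on $|S|$, upgrading the $(12+15\epsilon)$ bound already derived above to $(4+14\epsilon)$ by using the refined sets $S(p,\gamma)$ exactly as in the simple-polygon and polygonal-domain cases. Fix $S'\subset S$ with $|S'|\le k$ and $p,q\in S\setminus S'$. If $p$ and $q$ lie on the same side of every bounding shortest path of the current separator then both are in $S_{in}$ or both in $S_{out}$, and the induction hypothesis supplies a $(4+14\epsilon)$-spanner path in $\mathcal{G}\setminus S'$. Otherwise some bounding shortest path $\gamma$ separates them, say $p\in S_{in}$ and $q\in S_{out}$, and $\pi(p,q)$ meets $\gamma$ at a point $r$ with $\pi(p,q)=\pi(p,r)+\pi(r,q)$ by optimal substructure. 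The projections onto $\gamma$ of the at most $k$ points of $S'$ are vertices of $\mathcal{G}_\gamma$, so since $\mathcal{G}_\gamma$ is a $(k,(4+5\epsilon))$-VFTAWS there is a path of $\mathcal{G}_\gamma$ avoiding them of length at most $(4+5\epsilon)\,d_{\gamma,w}(p_\gamma',q_\gamma')$, where $p_\gamma'$ is $p_\gamma$ if $r\notin\gamma(p)$ and otherwise the point of $S(p,\gamma)$ nearest to $p$ (and $q_\gamma'$ is defined symmetrically). Lifting this path to $\mathcal{G}$ by replacing each on-$\gamma$ vertex with a non-faulty preimage in $S$ produces a $p$--$q$ path in $\mathcal{G}\setminus S'$, and, arguing with the triangle inequality and the weight assignment $w(z_\gamma)=w(z)+d_{\mathcal{T}}(z,z_\gamma)$ exactly as in the derivation of~(\ref{eq44}), one gets $d_{\mathcal{G}\setminus S'}(p,q)\le (4+5\epsilon)\,d_{\gamma,w}(p_\gamma',q_\gamma')$.

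The final step is to bound $d_{\gamma,w}(p_\gamma',q_\gamma')$ by $(1+\epsilon)\,d_{\mathcal{T},w}(p,q)$. Here I would invoke the terrain analogue of Lemma~\ref{lemfromabam} proved in~\cite{conf/soda/AbamBS17}, namely $d_{\mathcal{T}}(p,p_\gamma')+d_\gamma(p_\gamma',r)\le(1+\epsilon)\,d_{\mathcal{T}}(p,r)$ and $d_\gamma(r,q_\gamma')+d_{\mathcal{T}}(q_\gamma',q)\le(1+\epsilon)\,d_{\mathcal{T}}(r,q)$, and then combine these with $d_\gamma(p_\gamma',q_\gamma')\le d_\gamma(p_\gamma',r)+d_\gamma(r,q_\gamma')$, the expansions $w(p_\gamma')=w(p)+d_{\mathcal{T}}(p,p_\gamma')$ and $w(q_\gamma')=w(q)+d_{\mathcal{T}}(q_\gamma',q)$, non-negativity of the weights, and the optimal substructure property $\pi(p,q)=\pi(p,r)+\pi(r,q)$, to obtain $d_{\gamma,w}(p_\gamma',q_\gamma')\le(1+\epsilon)\,d_{\mathcal{T},w}(p,q)$. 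Chaining the two estimates gives $d_{\mathcal{G}\setminus S'}(p,q)\le(4+5\epsilon)(1+\epsilon)\,d_{\mathcal{T},w}(p,q)\le(4+14\epsilon)\,d_{\mathcal{T},w}(p,q)$ for $\epsilon\le1$, which closes the induction; a final rescaling of $\epsilon$ turns this into a $(k,(4+\epsilon))$-VFTAWS.

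The step I expect to be the real obstacle is the terrain version of Lemma~\ref{lemfromabam}: because $\gamma$ is now a geodesic shortest path on $\mathcal{T}$ rather than a line segment, the claim that subdividing $\gamma(p)$ into $O(1/\epsilon^{2})$ short pieces lets one reach the crossing point $r$ through the appropriate representative in $S(p,\gamma)$ at multiplicative cost only $1+\epsilon$ must be established with the geometry of shortest paths on terrains rather than planar geometry; I would borrow exactly this statement from~\cite{conf/soda/AbamBS17}. A secondary subtlety worth flagging is that a single vertex of $\mathcal{G}_\gamma$ may be the common $\gamma$-representative of several points of $S$, so the lift of a fault-avoiding path in $\mathcal{G}_\gamma$ must consistently pick non-faulty preimages; this is handled just as in~\cite{conf/soda/AbamBS17,conf/compgeom/AbamAHA15}.
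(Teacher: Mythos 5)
Your proposal is correct and follows essentially the same route as the paper: recurse on the balanced sp-separator of Theorem~\ref{thm7}, build a $(k,(4+5\epsilon))$-VFTAWS on the refined projected sets $S_\gamma$ via Algorithm~\ref{alg:addfts}, lift fault-avoiding paths back to $\mathcal{T}$, and bound $d_{\gamma,w}(p_\gamma',q_\gamma')$ against $d_{\mathcal{T},w}(p,q)$, with the same $O(\lg n)$-depth recursion for the edge count. If anything you are more careful than the paper at precisely the step you flag: the paper's terrain proof justifies $d_{\mathcal{T}}(p,p_\gamma')+d_\gamma(p_\gamma',r)\le d_{\mathcal{T}}(p,r)$ by ``the triangle inequality'' (which actually gives the reverse direction), whereas your explicit appeal to the $(1+\epsilon)$ terrain analogue of Lemma~\ref{lemfromabam} from \cite{conf/soda/AbamBS17} is the correct justification and yields the same final bound $(4+5\epsilon)(1+\epsilon)\le 4+14\epsilon$.
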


\begin{proof}
The argument for the number of edges is same as in the proof of Theorem~\ref{thm:polydom}.
To prove that the graph $\mathcal{G}$ is a geodesic $(k,(4+14\epsilon))$-VFTAWS for the points in $S$, we use induction on $|S|$.
\ignore {
In specific, we show that for any set $S' \subset S$ with $|S'| \le k'$ and for any two points $p, q \in S \setminus S'$, there exists a $(4+14\epsilon)$-spanner path between $p$ and $q$ in $\cal{G} \setminus S'$. 
The induction hypothesis assumes that for the number of points $k' < |S|$ in a region of $\mathcal{T}$, there exists a $(4+14\epsilon)$-spanner path between any two points of the given region in $\mathcal{G} \setminus S^{'}$. 
For the case of both $p$ and $q$ on the same side of a separator (i.e., both $p$ and $q$ are either in $S_{in}$ or in $S_{out}$), by induction hypothesis there exists a $(4+14\epsilon)$-spanner path between $p$ and $q$ in $\mathcal{G} \setminus S'$. 
The case remaining to be proved is when $p$ and $q$ lie on distinct sides of $\gamma$, i.e., $p \in S_{in}$ and $q \in S_{out}$ or $q \in S_{in}$ and $p \in S_{out}$. 
}
W.l.o.g., we assume that $p \in S_{in}$ and $q \in S_{out}$. 
Let $r$ be the point at which the geodesic shortest path $\pi(p, q)$ between $p$ and $q$ intersects $r$.
Since ${\mathcal{G}}_\gamma$ is a $(k,(4+5\epsilon))$-VFTS, there exists a path $R$ between $p_\gamma$ and $q_\gamma$ in ${\mathcal{G}}_\gamma$ of length at most $(4+5\epsilon).d_{\gamma,w}(p_\gamma, q_\gamma)$. 
By replacing each vertex $x_\gamma$ of $R$ by $x \in S$ such that $x_\gamma$ is the projection of $x$ on $\gamma$, yields a path $R'$ between $p$ and $q$ in $\mathcal{G} \setminus S'$. 
The length $d_{\mathcal{G} \setminus S'}(p,q)$ of path $R'$ is less than or equal to the length of the path $R$ in $\mathcal{G}_\gamma$.
If $r \notin \gamma(p)$, point $p_\gamma'$ (resp. $q_\gamma'$) is set as $p_{\gamma}$ (resp. $q_{\gamma}$).
Otherwise, $p_\gamma'$ (resp. $q_\gamma'$) is set as the point in $S(p,\gamma)$ (resp. $S(q,\gamma)$) that is nearest to $p$ (resp. $q$). 

{\setlength{\abovedisplayskip}{0pt}
\begin{flalign}
\hspace{6mm}d_{\gamma,w}(p_\gamma',q_\gamma') &= w(p_\gamma') + d_{\gamma}(p_\gamma',q_\gamma') + w(q_\gamma')&\nonumber\\
        &\leq w(p_\gamma') + d_{\gamma}(p_\gamma',r) + d_{\gamma}(r,q_\gamma') + w(q_\gamma')&&\nonumber\\
        &\text{[by the triangle inequality]}&&\nonumber\\
        &\leq w(p_\gamma') + d_{\gamma}(p_\gamma',r) + w(r) + w(r) +  d_{\gamma}(r,q_\gamma') + w(q_\gamma')&&\nonumber\\
        &\text{[since the weight of each point is non-negative]}&&\nonumber\\
    &= w(p) + d_{\mathcal{T}}(p,p_\gamma') + d_{\gamma}(p_\gamma',r) + w(r) + w(r) + d_{\gamma}(r,q_\gamma') +&&\nonumber\\
    &d_{\mathcal{T}}(q_\gamma',q) + w(q)\label{eq49}&&\\
        &\text{[due to the association of weight to the projections of points]}.&&\nonumber
\end{flalign}}

\noindent
From the triangle inequality, we know that 
$d_{\mathcal{T}}(p,p_\gamma') + d_{\gamma}(p_\gamma',r) \le d_{\mathcal{T}}(p,r)$, and
$d_{\gamma}(r,q_\gamma') + d_{\mathcal{T}}(q_\gamma',q) \le d_{\mathcal{T}}(r,q)$.
Hence (\ref{eq49}) is written as

{\setlength{\abovedisplayskip}{0pt}
\begin{flalign}
\hspace{6mm}d_{\mathcal{T},w}(p_\gamma',q_\gamma') &\leq w(p) + d_{\mathcal{T}}(p,r) + w(r) + w(r) + d_{\mathcal{T}}(r,q) + w(q)&&\nonumber\\
        &= d_{\mathcal{T},w}(p,r) + d_{\mathcal{T},w}(r,q)&&\nonumber\\
        &= d_{\mathcal{T},w}(p,q)\label{eq50}&&\\
    &\text{[since} \ \pi(p,q) \  \text{intersects} \ \gamma \ \text{at} \ r\text{].}&&\nonumber
\end{flalign}}

\noindent
Replacing $p_{\gamma}$ (resp. $q_{\gamma}$) by ${p_{\gamma}}'$ (resp. ${q_{\gamma}}'$) in inequality (\ref{eq44}),
{\setlength{\abovedisplayskip}{0pt}
\begin{flalign}
\hspace{6mm}d_{\mathcal{G} \setminus S'}(p,q) &\leq (4+5\epsilon).d_{\gamma,w}(p_\gamma',q_\gamma')&&\nonumber\\
	&\leq (4+5\epsilon) d_{\mathcal{T},w}(p,q) \hspace{0.2in} \text{[from} \ (\ref{eq50})\text{.]}&&\nonumber
\end{flalign}}
Thus $\mathcal{G}$ is a geodesic $(k, (4+\epsilon))$-VFTAWS for $S$. 
\end{proof}

\section{Conclusions}
\label{sect:conclu}

In this paper, we gave algorithms to achieve $k$ vertex fault-tolerance when the metric is additive weighted.
We devised algorithms to compute a $(k, 4+\epsilon)$-VFTAWS when the input points belong to any of the following: $\mathbb{R}^{d}$, simple polygon, polygonal domain, and the terrain.
Apart from the efficient computation, it would be interesting to explore the lower bounds on the number of edges for the fault-tolerant additive weighted spanners. 
Besides, the future work in the context of additive spanners could include finding the relation between the vertex-fault tolerance and the edge-fault tolerance, and optimizing various spanner parameters, like degree, diameter and weight.

\bibliographystyle{plain}


\end{document}